\newcommand{\Tr}{\operatorname{Tr}}
\newcommand{\E}{\mathbb{E}}
\newcommand{\OPT}{\mathrm{OPT}}
\newcommand{\SOL}{\mathrm{SOL}}
\newcommand{\SDP}{\mathrm{SDP}}
\newcommand{\Id}{\mathbb{I}}
\newcommand{\sgn}{\operatorname{sign}}
\newcommand{\ket}[1]{\left|#1\right\rangle}
\newcommand{\ip}[2]{\left\langle #1,#2\right\rangle}
\newcommand{\htfimp}{H_{\mathrm{TFIM}'}}
\theoremstyle{plain}
\newtheorem{theorem}{Theorem}
\newtheorem{lemma}{Lemma}
\newtheorem{corollary}{Corollary}
\theoremstyle{definition}
\newtheorem{definition}{Definition}
\newtheorem{remark}{Remark}
\title{\textbf{Product-State Approximation Algorithms for the Transverse Field Ising Model}}
\author{Vincenzo Lipardi\thanks{vincenzo.lipardi@maastrichtuniversity.nl}}
\author{David Mestel\thanks{david.mestel@maastrichtuniversity.nl}}
\author{Georgios Stamoulis\thanks{georgios.stamoulis@maastrichtuniversity.nl}}
\affil{Department of Advanced Computing Sciences, Maastricht University}
\date{}
\begin{document}
\date{}
\maketitle

\begin{abstract}
We study classical polynomial-time approximation algorithms for the transverse-field Ising model (TFIM) Hamiltonian, allowing a mixture of ferromagnetic and anti-ferromagnetic interactions between pairs of qbits, alongside transverse field terms with arbitrary non-negative weights.

Our main results are a series of approximation algorithms (all approximation ratios with respect to the true quantum optimum):
(i) a simple maximum of two product state rounding algorithm achieving an approximation ratio $\gamma\approx 0.71$ ,
(ii) a strengthened rounding, inspired by the anticommutation property of the two $X_i, Z_iZ_j$ observables achieving ratio $\gamma\approx 0.7860$, and (iii) a further improvement by interpolation achieving ratio $\gamma \approx 0.8156$.
We also give an explicit (purely ferromagnetic) TFIM instance on three qbits for which every product state achieves at most $169/180\approx 0.9389$ of the true optimum, yielding an upper bound for all algorithms producing product state approximations, even in the purely ferromagnetic case.
\end{abstract}


\section{Introduction}

\textit{Local Hamiltonians} are operators associated with the total energy of quantum many-body systems and are represented by Hermitian matrices. Even though general Hamiltonians can have exponentially large descriptions with respect to the number of interacting particles, local Hamiltonians can be succinctly represented as the summation of only the non-trivial locally interacting terms, which are usually very small (2 or 3). 

The \textit{Local Hamiltonian problem} asks to determine whether the \textit{ground state energy} (associated with the smallest eigenvalue of the operator) of a given local Hamiltonian is below a certain threshold $\alpha$ or above a different threshold $\beta$, with a promise that one of these two conditions holds. This problem is the quantum analogue of classical constraint satisfaction problems \cite{DBLP:journals/siamcomp/CubittM16}. In a seminal work \cite{kitaev2002classicalQuantumComputation} Kitaev showed that computing the ground state energy of such Hamiltonians is a \textbf{QMA}-hard problem \cite{bookatz2014QMAcomplete}, even when the interactions are restricted in an one-dimensional line and the dimension of each particle is bounded by 12 \cite{kempe2006complexityLocalHamiltonian, Aharonov_2009, aharonov2002quantumnpsurvey}. Such problems are believed to be intractable even for quantum computers (if and when they might be built).  

In recent years, the study of Local Hamiltonians has also attracted growing interest within the computer science community.  Given the fundamental importance and computational intractability of the problem, research efforts have focused on finding \textit{approximation algorithms}: efficient algorithms designed to estimate the ground-state energy by producing low-energy states that approximate the true ground state within provable performance guarantees.  \cite{BranDaoHarrow2016, BravyiGossetKoenigTemme2019, GharibianKempe2012, GharibianParekh2019, HallgrenLeeParekh2020, ParekhThompson2021, ParekhThompson2022, AnshuGossetMorenz2020, King2023, ParekhThompsonICALP2021, TakahashiEtAl2023, WattsEtAl2023, lee_et_al:LIPIcs.ICALP.2024.105, carlson2024approximationalgorithmsquantummaxdcut, gribling2025improvedapproximationratiosquantum, DBLP:journals/qic/BansalBT09, jorquera2025monogamyentanglementboundsimproved, kannan2024quantumapproximateoptimizationalgorithm, apte202508395approximationalgorithmeprproblem} These algorithms typically output product states or states that can be efficiently represented on a classical device, such as Matrix Product States. Although the `ansatzes' from which these states are drawn are not fully general, they are frequently relevant in practice, for example in the efficient simulation of material properties in quantum chemistry.

\medskip
The Transverse Field Ising model (TFIM) is a well-known and well-studied quantum spin system consisting of competing  \textit{Ising} interactions and  \textit{transverse fields}. An instance consisting of $n$ interacting particles (qubits) can be represented by a simple, non-directed simple graph $G=(V,E)$ with vertex set $V$, one for each particle, and edge set $E$, one for each pair of interacting particles. The TFIM Hamiltonian is defined as

\begin{equation}\label{eq:tfim}
H_{\mathrm{TFIM}} = \sum_{\{i,j\}\in E}  w_{ij} J_{ij} Z_i Z_j - \sum_{i\in V}  h_i X_i,
\qquad  J_{ij}\in \{-1,+1\}, w_{i,j}, h_i \geq 0.
\end{equation}
\noindent
where $X_i,Z_i$ are Pauli operators on qubit $i$, $J_{ij}$ are arbitrarily signed couplings and $h_i$ are the single-qubit transverse-field strengths.

This paper studies \emph{multiplicative} approximation algorithms, which are subtle in quantum Hamiltonians because, for instance, the ground state energy of the Hamiltonian as written in (\ref{eq:tfim}) could be negative or even zero.  More generally, adding a constant multiple of the identity changes multiplicative approximability even though it does not change the ground state. This is the same classical phenomenon behind ``Ising energy'' versus ``Max-Cut'' objectives, and is explicitly discussed in the quantum setting (i.e., Gharibian-Parekh \cite{GharibianParekh2019} for the Heisenberg model). Consequently, we fix an equivalent maximization version of the problem in which the Hamiltonian is a weighted sum of local \emph{projections} (taking values between 0 and 1), and we approximate its maximum eigenvalue.

Given the equivalent formulation of the TFIM problem, we design classical polynomial-time algorithms outputting explicit \emph{product states} (unentangled states) achieving a guaranteed fraction of the true quantum optimum. Product states are a natural restricted ansatz (mean-field) and are algorithmically attractive due to efficient description and evaluation. Indeed, a product state can be described with polynomially many bits (in the input instance size), as opposed to entangled states which typically require exponentially many classical bits to be described. A central question is:  \emph{how well can product states approximate the true optimum of TFIM instances?}

\paragraph{Our Contributions.}
We give three constant-factor approximation algorithms and one product state upper bound:
\begin{enumerate}
\item One simple, baseline algorithm achieving $0.7154$ that takes the maximum of two solutions: one that completely ignores the Ising terms and optimizes the field terms, and the complementary that ignores the field terms and optimizes the Ising terms. This is easy to analyze because of the competing nature of the two constraints (Ising $ZZ$ terms vs field $X$ terms).
\item 
We introduce an SDP relaxation strengthened by \textit{second order conic} (SOC) inequalities capturing the local incompatibility (anticommutation) between the $X_i$ and $Z_iZ_j$ observables.
We again define two rounded product-state candidates and return the maximum of the two: a standard hyperplane rounding, and a second rounding that uses the SOC tradeoff to lower bound the $ZZ$-correlation that may remain given the SDP's $x$-mass. This yields an improved approximation ratio of $\approx 0.7860$.
\item The third algorithm performs interpolation on the $x$ variables obtained by the SOC-SDP: instead of fully trusting the SDP solution for the $x$ variables, as in the previous solution, we scale them with a factor $q$. This might ``tilt" the solution more towards the $ZZ$ terms. By again carefully bounding each term's contributions, and optimizing over the parameter $q$, we obtained an improved $0.8156$-approximation algorithm.
\item Finally, we construct a simple example with an explicit product-state upper bound of ($169/180$) i.e., demonstrating that no algorithm that outputs products states can achieve better than $169/180 \approx 0.9388$-approximation ratio against the true optimum.
\end{enumerate}

All guarantees hold for \emph{arbitrary sign patterns} $J_{ij}\in\{+/- 1\}$ (including frustrated spin glasses) and weighted constraints. Also these approximation guarantees can be compared to the UGC-hardness bound of $\alpha_{\mathrm{GW}}\approx 0.8786$ inherited from classical {\sc MaxCut}. 

\paragraph{Scope relative to prior work.}
Bravyi and Hastings \cite{bravyi2014complexityquantumisingmodel} showed that estimating the ground state energy of TFIM on degree-3 graphs is a complete problem for the complexity class \textbf{Stoq-QMA}, a restriction of the class \textbf{QMA} to stoquastic Hamiltonians. Bravyi and Gosset  \cite{PhysRevLett.119.100503} give randomized approximation schemes for the \emph{partition function} (hence additive free-energy/ground-energy estimates) for certain ferromagnetic stoquastic models including ferromagnetic TFIM after a basis change. An instance is ferromagnetic when the couplings $J_{ij}$ are non-negative. Our approach, on the other hand, works for arbitrary sign patterns (and indeed for arbitrary weights). Such additive schemes do not address worst-case  \emph{multiplicative} approximation of the normalized Hamiltonian, nor they seem to extent to arbitrary signed/frustrated couplings. In contrast, our results not only apply to signed instances but also provide explicitly constructed product states with guaranteed multiplicative value in the constraint objective. Works on quantum Heisenberg/Quantum MaxCut Hamiltonians study different local terms. TFIM has commuting $ZZ$ edges but noncommuting $X$ fields, requiring a relaxation that captures the local $X$ vs $ZZ$ tradeoff.

\section{The Model and its relaxation}

On one qbit, let
\[
X=\begin{pmatrix}0&1\\1&0\end{pmatrix},
\qquad
Z=\begin{pmatrix}1&0\\0&-1\end{pmatrix},
\qquad
\Id=\begin{pmatrix}1&0\\0&1\end{pmatrix}.
\]
On $n$ qubits, $X_i$ denotes $X$ acting on qbit $i$ tensored with identity on others; similarly for $Z_i$.
Pauli operators satisfy $X_i^2=Z_i^2=\Id$, and they anticommute when acting on the same qubit $X_i Z_i = - Z_i X_i$, while they commute when acting on different qbits.

A (mixed) quantum state on $n$ qbits is a density matrix $\rho\succeq 0$ with $\Tr(\rho)=1$.
For a Hermitian observable $A$, its expectation value is denoted $\langle A\rangle_\rho := \Tr(\rho A)$.
A \emph{product state} is $\rho=\rho_1\otimes\cdots\otimes\rho_n$. Any single-qubit state can be written in the Bloch form
\[
\rho_i = \frac12\left(\Id + x_i X + y_i Y + z_i Z\right),
\quad \mbox{ such that } x_i^2+y_i^2+z_i^2\le 1.
\]
We note that $\langle X_i\rangle_{\rho_i}=x_i$ and $\langle Z_i\rangle_{\rho_i}=z_i$.

\subsection{Formulating the approximation problem }

A fundamental computational task is to approximate the ground-state energy $\lambda_{\min}(H_{\mathrm{TFIM}})$ and to output a quantum state (in our case, a \textit{product state}) whose energy is close to optimum (ground state). For classical approximation algorithms, however, the ``usual'' multiplicative approximation ratio for minimization is not well suited for quantum Hamiltonians because the ground energy can be negative, or even zero. To obtain a robust notion of approximation we transform our $H_{\mathrm{TFIM}}$ to an equivalent \emph{maximization} formulation built from positive semidefinite (PSD) ``constraint'' operators. This affine shift does not change the physics of the model, it only transforms the Hamiltonian to one where the approximation mechanics make sense. The idea is to replace each local term by a PSD operator that rewards the \emph{locally energy-minimizing} eigenspace, so that the global objective becomes a sum of nonnegative constraint values. This provides a natural scale for multiplicative approximation guarantees.

For an edge $\{i,j\}\in E$, the operator $J_{ij} Z_i Z_j$ has eigenvalues $+1/-1$. The contribution of this edge to \eqref{eq:tfim} is minimized when $J_{ij}Z_iZ_j = -1$. This can happen in two ways: either $J_{ij} = +1$ in which case the corresponding vertices would like to be \emph{anti}-aligned (an `antiferromagnetic' interaction), or $J_{ij} = -1$ in which case $i$ and $j$ would like to be aligned in the $z$-direction (a `ferromagnetic' interaction).  We define the PSD projector onto this minimizing eigenspace:
\begin{displaymath}
\Pi_{ij}^{Z} = \frac{1}{2}\bigl(\Id - J_{ij} Z_i Z_j\bigr).
\end{displaymath}
This operator has eigenvalues $\{0,1\}$ and equals $1$ exactly when the edge is ``satisfied'' in the energy-minimizing sense.

Similarly for the field ``constraints", for a vertex $i\in V$ the single-qbit operator $-h_i X_i$ is minimized when $X_i=+1$ (the $|+\rangle$ eigenstate of $X$). We define the PSD projector onto this minimizing eigenspace:
\begin{displaymath}
\Pi_{i}^{X} = \frac{1}{2}\bigl(\Id + X_i\bigr).
\end{displaymath}
Again $\Pi_{i}^{X}$ has eigenvalues $\{0,1\}$.

We can now define the equivalent tranfsformed Hamiltonian by summing these PSD constraints with the same nonnegative weights as in the TFIM:
\begin{equation}
\label{eq:hcsp}
H_{\mathrm{TFIM'}} = \sum_{\{i,j\}\in E} w_{ij} \Pi_{ij}^{Z} + \sum_{i\in V} h_i\, \Pi_{i}^{X}.
\end{equation}

By construction we have that $H_{\mathrm{TFIM'}}\succeq 0$ and that  $\langle\psi|H_{\mathrm{TFIM'}}|\psi\rangle$ is the energy of the state $|\psi\rangle$. By expanding \eqref{eq:hcsp} we immediately see that that maximizing $H_{\mathrm{TFIM'}}$ is exactly equivalent to minimizing $H_{\mathrm{TFIM}}$, up to an affine shift. Indeed,

\begin{eqnarray*}
H_{\mathrm{TFIM'}} & = & \sum_{\{i,j\}\in E} w_{ij}\Pi_{ij}^{Z} + \sum_{i\in V} h_i\, \Pi_{i}^{X}  \\ 
& = & \sum_{\{i,j\}\in E}\frac{w_{ij}}{2}\bigl(\Id - J_{ij} Z_i Z_j\bigr) + \sum_{i\in V} h_i\frac{1}{2}\bigl(\Id + X_i\bigr) \\
& = & \frac{1}{2} \Big( \sum_{\{i,j\} \in E} w_{ij} \Big)\Id -\frac{1}{2}\sum_{\{i,j)\}\in E} J_{ij} Z_i Z_j + \frac{1}{2}\Bigl(\sum_{i\in V} h_i\Bigr)\Id
+\frac{1}{2}\sum_{i\in V} h_i X_i \\ 
& = & \frac{1}{2}\Bigl(\sum_{\{i,j\} \in E} w_{ij}+\sum_{i\in V} h_i\Bigr)\Id
-\frac{1}{2}H_{\mathrm{TFIM}}.
\end{eqnarray*}

\noindent
Taking maximum eigenvalues gives 
\begin{align*}
\lambda_{\max} \bigl(H_{\mathrm{TFIM'}}\bigr) & = \frac{1}{2}\Bigl(\sum_{\{i,j\} \in E} w_{ij}+\sum_{i\in V} h_i\Bigr) -\frac{1}{2}\lambda_{\min} \bigl(H_{\mathrm{TFIM}}\bigr) \Leftrightarrow \\ 
\lambda_{\min}\bigl(H_{\mathrm{TFIM}}\bigr) & =  \Bigl(\sum_{\{i,j\} \in E} w_{ij}+\sum_{i\in V} h_i\Bigr) - 2\lambda_{\max} \big( H_{\mathrm{TFIM'}}\big).
\end{align*}
Moreover, $H_{\mathrm{TFIM}}$ and $H_{\mathrm{TFIM'}}$ have the same eigenvectors, so any state that approximately maximizes \eqref{eq:hcsp} immediately yields an approximate ground state for \eqref{eq:tfim}.

The shift from \eqref{eq:tfim} to \eqref{eq:hcsp} serves two purposes. First, it gives us an objective that is a nonnegative sum of PSD local terms so that we can talk about multiplicative approximation ratios in a meaningful way. Second, the constraint form \eqref{eq:hcsp} reveals the competing and incompatible local preferences of the transverse-field and the Ising (edge) terms, which would be the core of our Second Order Programming relaxations in subsequent sections.  In the remainder of the paper we design classical algorithms that, given a signed TFIM instance,  round a suitable semidefinite relaxation to produce product states with provable constant factor guarantees for $\lambda_{\max}(H_{\mathrm{TFIM'}})$, and hence for the TFIM ground-state energy.

\begin{remark} 
Because $Z_iZ_j$ has eigenvalues $\pm 1$, the operator $J_{ij}Z_iZ_j$ has eigenvalues $\pm J_{ij}$. Therefore $\Pi^Z_{ij}$ has eigenvalues $(1\pm J_{ij})/2$.
Thus $\Pi^Z_{ij}$ is a \emph{projector} (eigenvalues $\{0,1\}$) if and only if $J_{ij}\in\{\pm1\}$. If one only needs PSD constraints (not necessarily projectors), then it suffices that $|J_{ij}|\le 1$; however the cleanest tranformation uses $\{\pm1\}$ signs and separates weights/magnitudes into $w_{ij}$.
\end{remark}

\begin{remark}
Allowing arbitrary signs $J_{ij}\in\{\pm1\}$ includes frustrated ``spin-glass'' instances. A standard invariant is the sign product along a cycle: if for some cycle $C$,
$\prod_{(i,j)\in C} J_{ij}=-1$, then the instance is frustrated (no gauge transform can make all signs $+1$).
\end{remark}

\begin{remark}
In the case where all $h_i=0$ and all $J_{ij}=-1$ we have that $\lambda_{\max}(\htfimp) = \textsc{MaxCut}(G)$.  Since the latter is known to be UGC-hard to approximate with approximation ratio better than $\alpha_{\mathrm{GW}}\approx 0.8786$~\cite{khot2004inapprox}, we have the same inapproximability result for $\lambda_{\max}(\htfimp)$.
\end{remark}


\section{Warm-up: a $0.7154$ two-candidate product state approximation algorithm}
We start with a simple algorithm that already achieves $0.7154$ approximation guarantee. In the next section we will refine it by explicitly accounting for the incompatibility between the two types of constraints (Ising vs. field), i.e., the fact that the corresponding observables anti-commute. The present algorithm simply defines two easy product state solutions: one that tries to gain as much as possible on the Ising terms while essentially ignoring the field terms, and a second one that does the opposite i.e.,  ignores the Ising terms but maximizes the field terms. We then output the better of the two. Recall that
\[
H_{\mathrm{TFIM'}} = \sum_{\{i,j\}\in E}w_{ij}\Pi^{Z}_{ij} + \sum_{i\in V} h_i \Pi_i^X,
\qquad \mathrm{OPT} =  \lambda_{\max}(H_{\mathrm{TFIM'}}).
\]
We also define
\[
H = \sum_{i\in V} h_i,
\qquad
H_Z = \sum_{\{i,j\}\in E}w_{ij }\Pi^{Z}_{ij},
\qquad
\mathrm{OPT}_Z = \lambda_{\max}(H_Z).
\]
Since $H_Z$ is diagonal in the computational ($Z$) basis, $\mathrm{OPT}_Z$ is achieved by a $Z$-basis product state (i.e., a classical $\{\pm 1\}$ assignment). Moreover, each $\Pi^Z_{ij}\preceq \Id$, so $\mathrm{OPT}_Z\le |E|$.

\paragraph{The Field Solution:} The field projector is $\Pi_i^X=\frac12(\Id +X_i)$, which is maximized when $X_i$ has eigenvalue $+1$, i.e., on the state $\ket{+}=\frac{\ket{0}+\ket{1}}{\sqrt{2}}$. We therefore define the field solution $\SOL^F$ to be the product state $\ket{+}^{\otimes n} = \ket{\psi}$. On this state, every field term is fully satisfied:
\[
\langle \psi | h_i \Pi_i^X | \psi \rangle = h_i .
\]
On the other hand, $\langle \psi | Z_i Z_j | \psi\rangle = 0$ and so
\[
\langle \psi | \Pi^Z_{ij} | \psi \rangle = \frac{1}{2} \qquad\text{for every }\{i,j\}\in E,
\]
independent of $J_{ij}$. Therefore,
\[
\SOL^F = \langle \psi | \htfimp | \psi  \rangle =  \frac{\sum_{\{i,j\} \in E} w_{ij}}{2} + H \geq \frac{\mathrm{OPT}_Z}{2} + H,
\]
where the last inequality uses $\mathrm{OPT}_Z\le |E|$.

\paragraph{The Ising Solution:} The Ising solution $\SOL^I$ complements the above by focusing on the edge clauses. Maximizing $\sum_{\{i,j\}}\Pi^Z_{ij}$ over $ZZ$-basis product states is exactly the classical \emph{signed MaxCut} (equivalently Max2XOR) objective: if $J_{ij}=+1$ the clause prefers anti-alignment, and if $J_{ij}=-1$ it prefers alignment. The Goemans--Williamson hyperplane rounding applies unchanged in this setting (it is the same SDP and the same rounding guarantee), giving the approximation ratio $\alpha_{\mathrm{GW}}\approx 0.878567$. Thus, we can compute in polynomial time a $ZZ$-basis product state $\rho^I$ such that
\[
\Tr[\rho^I H_Z] \geq  \alpha_{\mathrm{GW}}\,\mathrm{OPT}_Z.
\]
For the field terms, since $\Tr[\rho^I X_i] = \langle X_i\rangle_{\rho^I}=0$ on any $Z$-basis state, we have
\[
\Tr[\rho^I\, h_i \Pi_i^X]= h_i \cdot \frac{1+\langle X_i\rangle_{\rho^I}}{2} = \frac{h_i}{2},
\]
and so we get
\[
\SOL^I = \Tr[\rho^I H_{\mathrm{TFIM'}}] = \Tr[\rho^I H_Z] + \sum_i \Tr[\rho^I h_i\Pi_i^X]
\geq  \alpha_{\mathrm{GW}}\,\mathrm{OPT}_Z + \frac{H}{2}.
\]

We output the better of the two product states (i.e., the larger value between $\SOL^F$ and $\SOL^I$).

\begin{theorem}
Taking the maximum of the two product state solutions above gives an approximation algorithm for TFIM with approximation ratio $1-1/4\alpha_{\mathrm{GW}} \approx 0.7154$.
\end{theorem}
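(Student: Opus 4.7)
The plan is to set up a two-variable optimization on the parameters $Z := \OPT_Z$ and $h := H$ and show that no matter how the adversary distributes mass between Ising and field terms, one of $\SOL^F,\SOL^I$ is within factor $\gamma := 1 - 1/(4\alpha_{\mathrm{GW}})$ of $\OPT$.

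First I would establish a simple upper bound on the quantum optimum:
\[
\OPT \;\le\; \OPT_Z + H.
\]
This follows because $H_{\mathrm{TFIM}'} = H_Z + \sum_i h_i \Pi_i^X$ and the maximum eigenvalue of a sum of PSD operators is at most the sum of their maximum eigenvalues; for each field term $\lambda_{\max}(h_i\Pi_i^X)=h_i$.

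Next I would collect the two lower bounds already derived in the excerpt:
\[
\SOL^F \;\ge\; \tfrac{1}{2}\OPT_Z + H,
\qquad
\SOL^I \;\ge\; \alpha_{\mathrm{GW}}\,\OPT_Z + \tfrac{1}{2}H.
\]
Since $\max(\SOL^F,\SOL^I) \geq \lambda\,\SOL^F + (1-\lambda)\,\SOL^I$ for any $\lambda\in[0,1]$, convex-combining the two inequalities gives
\[
\max(\SOL^F,\SOL^I)
\;\ge\; \bigl(\tfrac{\lambda}{2}+(1-\lambda)\alpha_{\mathrm{GW}}\bigr)\OPT_Z
\;+\;\tfrac{1+\lambda}{2}\,H.
\]
Combined with $\OPT \le \OPT_Z + H$, the approximation ratio is at least the minimum of the two coefficients $\tfrac{\lambda}{2}+(1-\lambda)\alpha_{\mathrm{GW}}$ and $\tfrac{1+\lambda}{2}$.

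The only real calculation is to choose $\lambda$ so that these two coefficients are equal, which equalizes the worst case between a ``pure Ising'' instance ($h=0$) and a ``pure field'' instance ($Z=0$). Setting $\tfrac{\lambda}{2}+(1-\lambda)\alpha_{\mathrm{GW}} = \tfrac{1+\lambda}{2}$ solves to $\lambda^\star = 1 - 1/(2\alpha_{\mathrm{GW}})$, yielding the common value
\[
\gamma \;=\; \tfrac{1+\lambda^\star}{2} \;=\; 1-\tfrac{1}{4\alpha_{\mathrm{GW}}} \;\approx\; 0.7154.
\]
I do not anticipate any real obstacle here: the only subtlety is recognizing that one should upper bound $\OPT$ by $\OPT_Z+H$ (rather than the trivial $|E|+H$), since the lower bounds on $\SOL^F$ and $\SOL^I$ are themselves expressed in $\OPT_Z$ and $H$. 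The rest is the elementary min--max balancing illustrated above.
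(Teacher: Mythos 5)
Your proposal is correct and follows essentially the same route as the paper: the same subadditivity bound $\OPT\le\OPT_Z+H$, the same two lower bounds on $\SOL^F$ and $\SOL^I$, and the same balancing, with your convex-combination weight $\lambda$ playing the role of the dual of the paper's worst-case parameter $t=\OPT_Z/(\OPT_Z+H)$; both formulations reduce to the identical equalization and yield $1-1/(4\alpha_{\mathrm{GW}})$.
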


\begin{proof}
First, by subadditivity of the maximum eigenvalue,
\[
\mathrm{OPT}
=\lambda_{\max} \Big(H_Z+\sum_i h_i\Pi_i^X\Big) \leq \lambda_{\max}(H_Z)+\lambda_{\max} \Big(\sum_i h_i\Pi_i^X\Big) = \mathrm{OPT}_Z + H,
\]
since $\lambda_{\max}(\Pi_i^X)=1$ and the $\Pi_i^X$ act on different qbits.

Define \[t=\frac{\OPT_Z}{H+\OPT_Z} \leq \frac{\OPT_Z}{\OPT},\] and hence correspondingly $1-t=\frac{H}{H+\OPT_Z} \leq \frac{H}{\OPT}$. Then we have
\[\frac{\SOL^I}{\OPT} \geq \alpha_{\mathrm{GW}} \cdot t + \frac{1}{2}(1-t) = \frac{1}{2} + t\left(\alpha_{\mathrm{GW}}-\frac{1}{2}\right)\]
and \[\frac{\SOL^F}{\OPT}\geq \frac{t}{2} + 1-t = 1- \frac{t}{2}.\]
Hence the returned value satisfies 
\[
\frac{\max\{\SOL^I,\SOL^F\}}{\mathrm{OPT}} \geq \min_{t\in [0,1]} 
\max\left\{\frac{1}{2} + t\left(\alpha_{\mathrm{GW}}-\frac{1}{2}\right), 1- \frac{t}{2}\right\}.
\]
The first argument is an increasing function of $t$ and the second a decreasing function, so the minimum is at their intersection with $t=t^*=\frac{1}{2\alpha_{\mathrm{GW}}}$ and approximation ratio 
\[1-\frac{t^*}{2} = 1- \frac{1}{4\alpha_{\mathrm{GW}}} \approx 0.7154457.\]


\end{proof}

\section{A SOC-SDP relaxation upper-bounding the true quantum optimum}

\subsection{The anticommutation tradeoff inequality}
The key local geometric constraint comes from anticommutation. Intuitively, a single qbit cannot simultaneously have large $X$-polarization and large $Z$-polarization. In our setting, this translates into a quantitative tradeoff between $\langle X_i\rangle$ and any incident correlation $\langle Z_iZ_j\rangle$.

\begin{lemma}\label{anticomm}
Let $A,B$ be Hermitian operators with $A^2=B^2=\Id$ and $AB=-BA$. Then for every state $\rho$,
\[
\langle A\rangle_\rho^2 + \langle B\rangle_\rho^2 \le 1.
\]
\end{lemma}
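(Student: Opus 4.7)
The plan is to reduce the quadratic inequality on expectations to an operator norm bound on a real linear combination $C = \alpha A + \beta B$, exploiting the fact that anticommutation makes the cross-term vanish when we square $C$.

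First I would fix arbitrary real coefficients $\alpha,\beta$ and compute
\[
C^2 = (\alpha A + \beta B)^2 = \alpha^2 A^2 + \beta^2 B^2 + \alpha\beta(AB+BA) = (\alpha^2+\beta^2)\Id,
\]
where the involution property $A^2=B^2=\Id$ collapses the diagonal terms and the anticommutation relation $AB=-BA$ kills the cross term. Since $C$ is Hermitian (as a real combination of Hermitians) and $C^2$ is a nonnegative multiple of the identity, every eigenvalue of $C$ has absolute value $\sqrt{\alpha^2+\beta^2}$, so the operator norm satisfies $\|C\|=\sqrt{\alpha^2+\beta^2}$.

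Next I would use the variational characterization $\langle C\rangle_\rho = \Tr(\rho C) \le \|C\|$, which holds for any density matrix $\rho$ and Hermitian $C$, to deduce the linear inequality
\[
\alpha\langle A\rangle_\rho + \beta\langle B\rangle_\rho \le \sqrt{\alpha^2+\beta^2}
\]
for all real $\alpha,\beta$. Finally I would specialize to the choice $\alpha = \langle A\rangle_\rho$ and $\beta = \langle B\rangle_\rho$: the left-hand side becomes $\langle A\rangle_\rho^2 + \langle B\rangle_\rho^2$ and the right-hand side becomes its square root, so either both expectations vanish (and the claim is trivial) or dividing by $\sqrt{\langle A\rangle_\rho^2+\langle B\rangle_\rho^2}$ yields the desired bound.

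There is no real obstacle here; the only ``trick'' is recognizing that the anticommutator appears precisely in the cross-term of the square, which is what lets a generic linear combination behave like a $\pm 1$-valued observable. I would just need to be careful to note explicitly that the edge case $\langle A\rangle_\rho = \langle B\rangle_\rho = 0$ makes the inequality trivial, so that the final division step is legitimate.
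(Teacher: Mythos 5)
Your proposal is correct and is essentially the paper's own argument: the paper directly takes $M=\langle A\rangle_\rho A+\langle B\rangle_\rho B$, computes $M^2=(\langle A\rangle_\rho^2+\langle B\rangle_\rho^2)\Id$, and bounds $\Tr[\rho M]$ by $\|M\|$, exactly as you do after specializing your coefficients. Your extra care with the degenerate case $\langle A\rangle_\rho=\langle B\rangle_\rho=0$ is a minor tidiness improvement but does not change the substance.
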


\begin{proof}
Let $a=\langle A\rangle_\rho=\Tr[\rho A]$ and $b=\langle B\rangle_\rho=\Tr[\rho B]$, and define $M=aA+bB$. Using $A^2=B^2=\Id$ and $AB=-BA$, we get
\[
M^2 = a^2A^2 + b^2B^2 + ab(AB+BA) = (a^2+b^2)\Id,
\]
hence $\|M\|=\sqrt{a^2+b^2}$. On the other hand, because of the linearity of the trace
\[
\Tr[\rho M] = a\Tr[\rho A]+b\Tr[\rho B]=a^2+b^2.
\]
Since $|\Tr[\rho M]|\leq \|M\|_{\mathrm{op}} = \max\{|\lambda_{\min}(M)|, |\lambda_{\max}(M)|\}$ for Hermitian $M$, we obtain $a^2+b^2\leq \sqrt{a^2+b^2}$ i.e., $a^2+b^2 \leq 1$.
\end{proof}

\subsection{The relaxation}
We now define a convex relaxation that upper-bounds the \emph{true quantum optimum} $\OPT=\lambda_{\max}(H_{\mathrm{TFIM'}})$. The relaxation uses variables for (i) single-qbit $X$-expectations $x_i\approx \langle X_i\rangle$, and (ii) pairwise $ZZ$ correlations $c_{ij}\approx \langle Z_iZ_j\rangle$.
Lemma~\ref{anticomm} yields the SOC constraints relation between $x_i$ and $c_{ij}$ on each edge incident to $i$ by substituting $A = X_i$ and $B = Z_iZ_j$ and observing that they anticommute.

Recall from Section 2 that the edge constraints are
\[
\Pi_{ij}^{Z}=\frac{1}{2}\bigl(\Id - J_{ij}Z_iZ_j\bigr),
\qquad
\Pi_i^{X}=\frac{1}{2}\bigl(\Id + X_i\bigr).
\]
Accordingly, for each edge $\{i,j\}$ we define for convenience the signed variable
\[
t_{ij}:=-J_{ij}c_{ij},
\]
so that the (relaxed) edge constraint value is exactly $(1+t_{ij})/2$, matching $\Tr[\Pi_{ij}^{Z}\rho]$.

\begin{tcolorbox}
\begin{definition}[SOC-SDP relaxation]\label{def:socsdp}
Let $n=|V|$. The relaxation has variables $x_i\in[-1,1]$ and a Gram matrix $C=(c_{ij})$ with $c_{ii}=1$.
Let $u_1,\dots,u_n$ be unit vectors such that $c_{ij}=\ip{u_i}{u_j}$, and define
\[
t_{ij}:=-J_{ij}c_{ij}\in[-1,1].
\]
Maximize
\begin{equation}\label{sdpobj}
\SDP =   \sum_{\{i,j\}\in E} w_{ij}\,\frac{1+t_{ij}}{2} + \sum_{i\in V} h_i\,\frac{1+x_i}{2} = \sum_{\{i,j\} \in E} \SDP_E[i,j] + \sum_{i \in V}\SDP_X[i]
\end{equation}
subject to:
\begin{align}
C = (C_{ij}) = (c_{ij}) & \succeq 0, \\
\ip{u_i}{u_i} & = 1 \qquad \forall i\in V, \\
x_i & \in [-1,1]\qquad \forall i\in V, \\
x_i^2 + c_{ij}^2 & \leq 1,\quad x_j^2 + c_{ij}^2 \leq 1 \qquad \forall \{i,j\}\in E. \label{soc}
\end{align}
\end{definition}
\end{tcolorbox}


The next lemma states that every quantum state induces feasible moments for the SOC-SDP with same objective function value i.e.,  the SOC-SDP is an upper bound on the true  $\OPT$ even for arbitrary signed/frustrated patterns $J_{ij}\in\{\pm1\}$.

\begin{lemma}\label{sdpopt}
For every state $\rho$ on $n$ qbits, there exists a feasible solution to Definition~\ref{def:socsdp} with objective value exactly $\Tr[\htfimp\rho]$.
Consequently,
\[
\SDP \geq \OPT=\lambda_{\max}(H_{\mathrm{TFIM'}}).
\]
This holds for \emph{arbitrary} sign patterns $J_{ij}\in\{+1,-1\}$.
\end{lemma}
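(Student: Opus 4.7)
The plan is to read off the SDP variables directly from $\rho$'s one- and two-body Pauli moments, verify that the objective matches, and then check each constraint in turn. Concretely, I would set
\[
x_i := \langle X_i\rangle_\rho,\qquad c_{ij} := \langle Z_iZ_j\rangle_\rho,
\]
so that $t_{ij}=-J_{ij}\langle Z_iZ_j\rangle_\rho$, and each edge/vertex term of \eqref{sdpobj} reduces to $\Tr[\Pi_{ij}^{Z}\rho]$, respectively $\Tr[\Pi_i^{X}\rho]$. Summing then gives exactly $\Tr[\htfimp\,\rho]$, which handles the objective.

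Next I would dispose of the easy constraints. Boundedness $x_i\in[-1,1]$ is the operator-norm bound $\|X_i\|_{\mathrm{op}}=1$, and $c_{ii}=\Tr[\rho Z_i^2]=1$ since $Z_i^2=\Id$. For $C\succeq 0$, I would test against an arbitrary vector $v\in\mathbb{R}^n$: because the single-qbit $Z_i$'s pairwise commute (they act on different qbits) and each squares to $\Id$, the Hermitian operator $M := \sum_i v_i Z_i$ satisfies $M^2 = \sum_{i,j} v_i v_j Z_iZ_j$, hence
\[
v^{\top} C v \;=\; \Tr[\rho\, M^2] \;\geq\; 0.
\]
A Gram/Cholesky factorization of $C$ then supplies the unit vectors $u_i$ with $\langle u_i,u_j\rangle = c_{ij}$ and $\|u_i\|=1$.

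The remaining SOC constraints \eqref{soc} are precisely Lemma~\ref{anticomm} applied with $A=X_i$ and $B=Z_iZ_j$: both square to $\Id$ (for $B$ this uses $[Z_i,Z_j]=0$ together with $Z_i^2=Z_j^2=\Id$), and they anticommute, since $X_i$ anticommutes with $Z_i$ on qbit $i$ while commuting with $Z_j$ on qbit $j$. This yields $x_i^2+c_{ij}^2\leq 1$, with the analogous bound at vertex $j$ by symmetry. Finally, applying the construction to a maximizer $\rho^\star$ of $\Tr[\htfimp\,\rho]$ produces a feasible SDP point of value $\OPT$, giving $\SDP\geq\OPT$.

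I do not expect a genuine obstacle. The sign pattern $J_{ij}\in\{\pm1\}$ is absorbed cleanly by the change of variables $t_{ij}=-J_{ij}c_{ij}$, so frustration plays no role in feasibility. The only mildly substantive step is the PSDness of $C$, and that reduces to the identity $\sum_{i,j} v_iv_j Z_iZ_j = \bigl(\sum_i v_i Z_i\bigr)^2$ from $Z$-commutation; the real conceptual ingredient is Lemma~\ref{anticomm}, which was proved precisely to justify these SOC constraints.
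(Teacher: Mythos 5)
Your proposal is correct and follows essentially the same route as the paper's proof: define $x_i=\Tr[\rho X_i]$ and $c_{ij}=\Tr[\rho Z_iZ_j]$, verify positive semidefiniteness of $C$ via $\Tr[\rho(\sum_i v_iZ_i)^2]\geq 0$, obtain the SOC constraints from Lemma~\ref{anticomm} with $A=X_i$, $B=Z_iZ_j$, and match the objective term by term. The only (harmless) difference is that you spell out a few routine verifications the paper leaves implicit, such as $c_{ii}=1$ and why $Z_iZ_j$ squares to the identity.
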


\begin{proof}
Let us fix an arbitrary quantum state $\rho$ and let us define
\[
x_i := \Tr[\rho X_i],\qquad c_{ij} := \Tr[\rho Z_iZ_j].
\]
Since Pauli operators have operator norm $1$, we have $|x_i|\leq 1$ and $|c_{ij}|\leq 1$.

\noindent\textbf{PSD constraints:}
Let $C=(c_{ij})$ with $c_{ii}=1$. For any $b\in\mathbb{R}^n$,
\[
b^\top C b
=\sum_{i,j} b_i b_j\,\Tr[\rho Z_iZ_j]
=\Tr\left[\rho\left(\sum_i b_i Z_i\right)^2\right]\geq 0,
\]
so $C\succeq 0$ and admits a Gram representation $c_{ij}=\ip{u_i}{u_j}$ with $\|u_i\|=1$.

\noindent\textbf{SOC constraints:}
Fix an edge $\{i,j\}\in E$. Let $A=X_i$ and $B=Z_iZ_j$. Then $A^2=B^2=\Id$ and $AB=-BA$.
By Lemma~\ref{anticomm},
\[
\Tr[\rho X_i]^2 + \Tr[\rho Z_iZ_j]^2 \leq 1 \quad\Longleftrightarrow\quad
x_i^2 + c_{ij}^2\leq 1.
\]
The same argument with $X_j$ gives $x_j^2+c_{ij}^2\leq 1$.

\noindent\textbf{Objective function value:}
Using $\Pi_{ij}^{Z}=\frac12(\Id-J_{ij}Z_iZ_j)$ and $\Pi_i^{X}=\frac12(\Id+X_i)$ and linearity of trace,
\[
\Tr[\Pi_{ij}^{Z}\rho] = \frac{1 - J_{ij}\Tr[\rho Z_iZ_j]}{2}
= \frac{1 + (-J_{ij}c_{ij})}{2}=\frac{1+t_{ij}}{2},
\qquad
\Tr[\Pi_i^{X}\rho] = \frac{1+\Tr[\rho X_i]}{2}=\frac{1+x_i}{2}.
\]
Therefore the SOC-SDP objective equals $\Tr[\htfimp\rho]$. Taking $\rho$ to be an optimal eigenstate (ground state) for $\OPT$ yields $\SDP\geq \OPT$.
\end{proof}

In the analysis below, we will solve the SOC-SDP relaxation up to some arbitrary precision $\epsilon \geq 0$ to obtain a feasible solution consisting of vectors $x_i$ and a Gram matrix $(C)_{ij}$. We denote by SDP the objective function value obtained (over the instance we are solving) and we further decompose SDP into the edge Ising terms and the field terms:
\[
\SDP = \SDP_E + \SDP_X =\sum_{\{i,j\} \in E} \SDP_E[i,j] + \sum_{i \in V}\SDP_X[i]
\]

\section{A 0.786 approximation algorithm based on SOC-SDP}

Our algorithm computes two randomized product-state candidates from the \emph{same} SOC-SDP solution and returns the better one. 
The point of having two candidates is that different instances can be edge-dominated or field-dominated:
\begin{itemize}
\item \textbf{Candidate A (pure-$Z$ solution):} ignores the SDP $x_i$ values and uses a Goemans-Williamson style hyperplane rounding on the edge constraints $Z_iZ_j$. This gives us a strong approximation guarantee on edge constraints but only a trivial factor $\nicefrac{1}{2}$ on the field constraints.
\item \textbf{Candidate B ($X$, $ZZ$ mixed solution):} this solution matches the SDP field marginals exactly (hence achieves factor $1$ on field constraints) and uses the remaining budget according to Lemma \ref{anticomm} in order to round the edge constraints.
\end{itemize}
We then take the better of the two values to get a $0.786$ approximation factor.

\subsection{A pure $Z$ solution}
Algorithm \ref{alg:A}, given some TFIM input instance for the transformed hamiltonian $H_{\mathrm{TFIM'}}$, solves the corresponding SOC-SDP to obtain a feasible solution, then ignores the $x_i$ variables and uses pure a $ZZ$-basis rounding.
\begin{algorithm}
\caption{Pure $ZZ$ solution}
\label{alg:A}
\textbf{Input:} Unit vectors $u_i$ from the SOC-SDP solution.
\begin{enumerate}
\item Sample $g \sim N(0,\Id)$. 
\item Set $s_i = \sgn(\ip{g}{u_i})$ for all $i$.
\item Output the product state $\rho=\bigotimes_i \rho_i$ where
\[
\rho_i := \frac{1}{2}\Big(\Id + s_i Z_i\Big),\]
(so we have that $\langle X_i\rangle=0, \langle Z_i\rangle = s_i$).
\end{enumerate}
\end{algorithm}
In this algorithm, we crucially use Grothendieck's identity (see for example Lemma 3.6.5, page 82 from \cite{Vershynin_2018}), which is described in the following lemma. 
\begin{lemma}[Grothendieck's Identity]\label{grothendieck}
Let $u_i,u_j$ be arbitrary unit vectors, $g\sim N(0,\Id)$ be a random vector, and $s_i,s_j$ be their random signs obtained by the hyperplane rounding according to $s_i = \sgn(\ip{g}{u_i})$. Then
\[
\E[s_i s_j] = K(\ip{u_i}{u_j}),
\text{ where }
K(t) =\frac{2}{\pi}\arcsin(t).
\]
\end{lemma}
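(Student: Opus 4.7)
\medskip
\noindent\textbf{Proof plan.} The plan is to reduce the computation to a two-dimensional geometric picture via the rotational invariance of the standard Gaussian, and then evaluate an elementary angle probability.

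First, I would dispose of the degenerate case $u_i=\pm u_j$: then $s_i=\pm s_j$ almost surely, giving $\E[s_is_j]=\pm 1$, which matches $\tfrac{2}{\pi}\arcsin(\pm 1)=\pm 1$. So I assume $u_i,u_j$ are linearly independent and let $\theta\in(0,\pi)$ be the angle between them, so that $\ip{u_i}{u_j}=\cos\theta$.

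Next, I would restrict attention to the two-dimensional subspace $W=\mathrm{span}(u_i,u_j)$. Write $g=g_W+g_{W^\perp}$ where $g_W$ is the orthogonal projection of $g$ onto $W$. Since $\ip{g}{u_i}=\ip{g_W}{u_i}$ and similarly for $u_j$, only $g_W$ matters. By rotational invariance of the standard Gaussian, $g_W$ is distributed as a standard $2$-dimensional Gaussian on $W$, and in particular the direction $g_W/\|g_W\|$ is uniform on the unit circle of $W$.

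The key geometric observation is then that $s_i\ne s_j$ precisely when the hyperplane $\{v:\ip{g}{v}=0\}$ separates $u_i$ from $u_j$, which in the $2$D picture happens exactly when the direction of $g_W$ lies in one of the two arcs of the unit circle of $W$ between the lines perpendicular to $u_i$ and to $u_j$. Each such arc has length $\theta$, so
\[
\Pr[s_i\ne s_j] \;=\; \frac{2\theta}{2\pi} \;=\; \frac{\theta}{\pi}.
\]
Therefore
\[
\E[s_i s_j] \;=\; 1-2\Pr[s_i\ne s_j] \;=\; 1-\frac{2\theta}{\pi} \;=\; 1-\frac{2}{\pi}\arccos(\ip{u_i}{u_j}).
\]
Finally, using $\arccos(t)+\arcsin(t)=\pi/2$, this simplifies to $\tfrac{2}{\pi}\arcsin(\ip{u_i}{u_j})=K(\ip{u_i}{u_j})$, as desired.

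The only mildly delicate step is justifying that the projection $g_W$ onto a fixed $2$-plane is isotropic Gaussian on $W$, but this is immediate from the fact that the covariance of the standard Gaussian is the identity and orthogonal projection preserves this structure. Beyond that, the proof is a short two-dimensional angle calculation and a trigonometric identity.
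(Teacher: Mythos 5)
Your proof is correct. Note that the paper does not actually prove this lemma; it simply cites it as a standard fact (Lemma 3.6.5 of Vershynin's \emph{High-Dimensional Probability}), and your argument — reduction to the $2$-plane $\mathrm{span}(u_i,u_j)$ by rotational invariance, the separating-hyperplane probability $\Pr[s_i\ne s_j]=\theta/\pi$, and the identity $\arccos(t)+\arcsin(t)=\pi/2$ — is exactly the standard proof given there and in the Goemans--Williamson analysis. The handling of the degenerate case $u_i=\pm u_j$ and the justification that the projected Gaussian is isotropic on $W$ are both fine.
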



The following is also a standard inequality used in the analysis of the Goemans-Williamson {\scshape MaxCut} approximation algorithm: 
\begin{definition}\label{GW}
For $t\in[0,1]$ let
\[
R_A(t) =\frac{1+K(t)}{1+t},
\qquad
\alpha_{\mathrm{GW}} = \min_{t\in[0,1]} R_A(t) \approx 0.878567.
\]
\end{definition}

Now we analyze the value of each term for the solution produced by Algorithm \ref{alg:A} compared to its value in the SDP solution.  We start with the field terms in Lemma \ref{alg_A_fields} followed by the Ising terms in Lemma \ref{alg_A_edges}.

\begin{lemma}\label{alg_A_fields}
For each vertex $i$ we have that $\Tr[\Pi^X_i\rho]=\frac{1}{2}$. Hence we have
\[
\sum_{i\in V} h_i\,\Tr[\Pi^X_i\rho] =\frac{1}{2}\sum_{i\in V}h_i \geq  \frac{1}{2}\sum_{i\in V} h_i\,\frac{1+x_i}{2} = \frac{\SDP_X}{2},
\]
i.e. the total field contribution is at least $\frac{1}{2}\cdot \SDP_X$.
\end{lemma}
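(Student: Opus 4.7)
The plan is to compute $\Tr[\Pi_i^X\rho]$ directly from the Bloch form of the single-qbit marginals and then compare to the SOC-SDP value by using the feasibility box $x_i\in[-1,1]$.

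First, I would note that $\rho=\bigotimes_k \rho_k$ is a product state and $\Pi_i^X=\tfrac12(\Id+X_i)$ acts nontrivially only on qbit $i$, so $\Tr[\Pi_i^X\rho]=\Tr[\Pi_i^X\rho_i]$ with $\rho_i=\tfrac12(\Id+s_iZ_i)$. Expanding the product, $\Pi_i^X\rho_i=\tfrac14(\Id+X_i+s_iZ_i+s_iX_iZ_i)$. Now I use the elementary Pauli-trace identities $\Tr[\Id]=2$ and $\Tr[X]=\Tr[Z]=\Tr[XZ]=0$ (the last since $XZ=-iY$ is traceless). Only the first term survives, giving $\Tr[\Pi_i^X\rho_i]=\tfrac12$.

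Second, having established that each field term evaluates to $h_i/2$ on the rounded product state, I would sum over $i\in V$ to obtain $\sum_{i\in V} h_i\,\Tr[\Pi_i^X\rho]=\tfrac12\sum_{i\in V}h_i$. To compare this to $\SDP_X=\sum_{i\in V} h_i(1+x_i)/2$, I just use the SOC-SDP feasibility constraint $x_i\in[-1,1]$, which implies $(1+x_i)/2\le 1$, so $\sum_{i\in V} h_i \ge \sum_{i\in V} h_i(1+x_i)/2 = \SDP_X$, and halving gives the claimed bound.

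There is no real obstacle here: the statement is essentially the observation that a $Z$-basis product state is maximally uncertain in $X$, so every field term contributes exactly half of its maximum possible value, which in turn is at least half of what the SDP can extract. The only subtlety worth flagging is that the lower bound against $\SDP_X/2$ is \emph{loose}: it throws away all the field budget that the SDP claimed via $x_i>0$. This is precisely why Candidate A alone cannot yield a good approximation, and why Candidate B (matching the SDP field marginals) is introduced afterwards.
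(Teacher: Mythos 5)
Your proof is correct and follows essentially the same route as the paper: the paper simply notes that $\Tr[\rho X_i]=0$ by construction (which is exactly what your Pauli-trace expansion verifies) and that $\tfrac12\ge(1+x_i)/4$ since $x_i\le 1$. Your version is just a more explicit spelling-out of the same two steps, and your closing remark about the looseness of the bound correctly anticipates the role of Candidate B.
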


\begin{proof}
By construction $\Tr[\rho X_i]=0$, so $\Tr[\Pi^X_i\rho]=(1+0)/2=1/2\geq (1+x_i)/4$.
\end{proof}

\begin{lemma}\label{alg_A_edges}
Let $t_{ij}:=-J_{ij}\ip{u_i}{u_j}$. Then for each edge $\{i,j\}\in E$,
\[
 \E\big[w_{ij}\Tr[\Pi_{ij}^{Z}\rho]\big]
=
w_{ij} \frac{1+K(t_{ij})}{2} \geq w_{ij}\cdot \alpha_{\mathrm{GW}}\cdot \frac{1+t_{ij}}{2} = \alpha_{\mathrm{GW}} \cdot \SDP_E[i,j].
\]  Hence the total Ising contribution is at least $\alpha_{\mathrm{GW}} \cdot \SDP_E$.
\end{lemma}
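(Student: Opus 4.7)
The plan is to compute the expectation $\E[\Tr[\Pi^Z_{ij}\rho]]$ directly from the product structure of $\rho$ and then apply Grothendieck's identity together with the classical Goemans--Williamson inequality, being slightly careful about the sign of $J_{ij}$.

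First I would observe that because $\rho = \bigotimes_k \rho_k$ and $\Tr[Z_k \rho_k] = s_k$, we have $\Tr[Z_iZ_j\,\rho] = s_i s_j$. Substituting into the definition $\Pi_{ij}^Z = \frac12(\Id - J_{ij}Z_iZ_j)$ gives
\[
\Tr[\Pi_{ij}^Z \rho] \;=\; \frac{1 - J_{ij}s_i s_j}{2}.
\]
Taking expectation over the Gaussian vector $g$ and applying Lemma~\ref{grothendieck} yields $\E[s_i s_j] = K(\ip{u_i}{u_j})$. Since $K(t) = \tfrac{2}{\pi}\arcsin(t)$ is an odd function and $\ip{u_i}{u_j} = -J_{ij} t_{ij}$ (using $J_{ij}^2 = 1$), we get $K(\ip{u_i}{u_j}) = -J_{ij} K(t_{ij})$, so
\[
\E\bigl[\Tr[\Pi_{ij}^Z \rho]\bigr] \;=\; \frac{1 - J_{ij}\cdot(-J_{ij}K(t_{ij}))}{2} \;=\; \frac{1 + K(t_{ij})}{2}.
\]
Multiplying by $w_{ij}$ gives the claimed equality.

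For the inequality, I would invoke the Goemans--Williamson bound in the form of Definition~\ref{GW}: $R_A(t) = (1+K(t))/(1+t) \ge \alpha_{\mathrm{GW}}$. This is stated for $t \in [0,1]$, but it extends to all $t \in (-1,1]$ because on $t \in [-1,0]$ one has $K(t) \ge t$ (equivalently $\arcsin(t) \ge \tfrac{\pi}{2} t$), so $R_A(t) \ge 1 \ge \alpha_{\mathrm{GW}}$; the degenerate endpoint $t_{ij} = -1$ forces $\SDP_E[i,j] = 0$ and the inequality holds trivially. Therefore $w_{ij}\,\tfrac{1+K(t_{ij})}{2} \ge \alpha_{\mathrm{GW}}\cdot w_{ij}\,\tfrac{1+t_{ij}}{2} = \alpha_{\mathrm{GW}}\cdot \SDP_E[i,j]$, and summing over $\{i,j\}\in E$ and using linearity of expectation gives the total bound $\alpha_{\mathrm{GW}}\cdot \SDP_E$.

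The only subtle step is bookkeeping of the sign $J_{ij}$: one must use that $K$ is odd and that $t_{ij}$ is defined as $-J_{ij}\ip{u_i}{u_j}$, so the two sign factors cancel and the expected edge value depends only on $t_{ij}$, regardless of whether the edge is ferromagnetic or antiferromagnetic. Once this is sorted out, the rest is the standard GW analysis.
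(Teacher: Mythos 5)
Your proposal is correct and follows essentially the same route as the paper: product structure gives $\Tr[Z_iZ_j\rho]=s_is_j$, Grothendieck's identity plus oddness of $K$ absorbs the sign $J_{ij}$ into $t_{ij}$, and the Goemans--Williamson bound (extended to $t\le 0$ via $K(t)\ge t$) gives the inequality. Your explicit treatment of the endpoint $t_{ij}=-1$ is a small extra care the paper leaves implicit, but otherwise the arguments coincide.
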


\begin{proof}
Since $\rho$ is a product state with $\langle Z_i\rangle=s_i$, we have $\langle Z_iZ_j\rangle=s_is_j$ and
\[
\Tr[\Pi_{ij}^{Z}\rho]
=
\frac{1 - J_{ij}\langle Z_iZ_j\rangle}{2}
=
\frac{1 - J_{ij}s_is_j}{2}.
\]
Taking expectation and applying Grothendieck's Identity (Lemma \ref{grothendieck}),
\[
\E[-J_{ij}s_is_j] = -J_{ij}K(\ip{u_i}{u_j}) = K(-J_{ij}\ip{u_i}{u_j})=K(t_{ij}),
\]
since $K$ is odd. This gives $\E[\Tr[\Pi_{ij}^{Z}\rho]]=(1+K(t_{ij}))/2$.
The inequality with $\alpha_{\mathrm{GW}}$ holds by definition of $\alpha_{\mathrm{GW}}$ on $t\in[0,1]$. For $t\leq 0$ the ratio is at least $1$.
\end{proof}

Combining the two previous lemmas gives the following guarantee for Algorithm \ref{alg:A}:

\begin{theorem}\label{thm:A}
The product state $\rho^{(A)}$ output by Algorithm \ref{alg:A} satisfies
\[
\E[\Tr(\htfimp\rho^{(A)})] \geq \alpha_{\mathrm{GW}}\cdot \SDP_E \;+\; \frac{1}{2}\cdot \SDP_X,
\]
where $\SDP_E$ and $\SDP_X$ denote the edge and field parts of the SOC-SDP objective.
\end{theorem}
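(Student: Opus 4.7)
The plan is straightforward: Theorem \ref{thm:A} is essentially a bookkeeping consequence of the two preceding lemmas together with the additive decomposition of $\htfimp$. The substantive work (Grothendieck's identity on the edges, and the deterministic $\langle X_i\rangle=0$ on the fields) has already been done, so I would keep the proof short.

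First, I would expand $\htfimp$ via \eqref{eq:hcsp} and use linearity of trace and expectation to write
\[
\E\bigl[\Tr(\htfimp\rho^{(A)})\bigr] = \sum_{\{i,j\}\in E} w_{ij}\,\E\bigl[\Tr(\Pi^Z_{ij}\rho^{(A)})\bigr] + \sum_{i\in V} h_i\,\E\bigl[\Tr(\Pi^X_i\rho^{(A)})\bigr],
\]
so that the two term types can be bounded independently.

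Next I would plug in Lemma~\ref{alg_A_fields} on the second sum: because $\rho^{(A)}$ is a $Z$-basis product state, $\Tr(\Pi^X_i\rho^{(A)})=\tfrac12$ deterministically, giving $\sum_i h_i\,\Tr(\Pi^X_i\rho^{(A)})=\tfrac12\sum_i h_i \geq \tfrac12\SDP_X$, since each SDP field term $h_i(1+x_i)/2$ is at most $h_i$. For the first sum I would apply Lemma~\ref{alg_A_edges} edge-by-edge: Grothendieck's identity and the definition of $\alpha_{\mathrm{GW}}$ give
\[
w_{ij}\,\E\bigl[\Tr(\Pi^Z_{ij}\rho^{(A)})\bigr] \;\geq\; \alpha_{\mathrm{GW}}\cdot \SDP_E[i,j],
\]
uniformly in the sign of $t_{ij}$ (the ratio is bounded below by $\alpha_{\mathrm{GW}}$ for $t_{ij}\in[0,1]$ and is at least $1$ for $t_{ij}\leq 0$). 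Summing over edges yields $\alpha_{\mathrm{GW}}\cdot\SDP_E$.

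Adding the two lower bounds produces the inequality in the theorem statement. There is no real obstacle here, since the non-trivial content (the anticommutation-free rounding guarantee on fields and the Goemans--Williamson-type guarantee on edges) has been isolated in Lemmas~\ref{alg_A_fields} and \ref{alg_A_edges}; the only minor point worth stating explicitly is that the sign-independence of the Grothendieck bound—handled via the oddness of $K$ in the proof of Lemma~\ref{alg_A_edges}—is what lets us treat ferromagnetic and antiferromagnetic edges uniformly without casework in this final assembly.
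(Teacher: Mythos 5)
Your proposal is correct and matches the paper exactly: the paper gives no separate proof of Theorem~\ref{thm:A}, treating it as the immediate consequence of summing the per-term bounds of Lemmas~\ref{alg_A_fields} and~\ref{alg_A_edges} via linearity of trace and expectation, which is precisely your assembly. Your remark that the sign-independence (oddness of $K$, plus the ratio being at least $1$ for $t_{ij}\le 0$) is already absorbed into Lemma~\ref{alg_A_edges} is also accurate.
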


\subsection{A mixed $X$, $ZZ$ algorithm}

Algorithm \ref{alg:B} matches the SDP field marginals exactly and uses the remaining ``budget" according to Lemma \ref{anticomm} to round the $Z$-marginals for edges. Thus, it will be easy to see that it achieves approximation factor 1 on the field terms, and in the following (Lemma \ref{alg_B_edges}) we will bound the approximation guarantee on the Ising terms.

\begin{algorithm}[!ht]
\caption{$X$, $ZZ$ mixed solution}
\label{alg:B}
\textbf{Input:} SOC-SDP solution $(x_i,u_i)$ from Definition~\ref{def:socsdp}
\begin{enumerate}
\item For each vertex $i$, set $\alpha_i = \sqrt{1-x_i^2}$.
\item Sample $g \sim N(0,\Id)$.
\item Set $s_i = \sgn(\ip{g}{u_i})$.
\item Output the product state $\rho=\bigotimes_{i\in V}\rho_i$ where
\[
\rho_i = \frac{1}{2}\Big(\Id + x_i X_i + \alpha_i s_i Z_i\Big),\]
(so we have that $\langle X_i\rangle = x_i, \langle Z_i\rangle = s_i\alpha_i$).
\end{enumerate}
\end{algorithm}

Note that each $\rho_i$ is a valid pure quantum state since its Bloch vector has length $x_i^2+\alpha_i^2=1$.


\begin{lemma}\label{alg_B_fields}
For each vertex $i$, Algorithm~\ref{alg:B} satisfies
\[
\Tr[h_i \Pi^X_i\rho]=\frac{1+x_i}{2} = \SDP_X[i].
\]
Hence the total field contribution equals the SOC-SDP field contribution $\SDP_X$.
\end{lemma}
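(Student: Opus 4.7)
The proof is a direct single-qubit computation, made essentially trivial by the design of Algorithm~\ref{alg:B}, which matches the SDP's $x_i$ values exactly as the $X$-polarizations of the output state.

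My plan is to exploit two simple facts: (i) $\Pi^X_i = \tfrac{1}{2}(\Id + X_i)$ acts nontrivially only on qubit $i$; and (ii) $\rho = \bigotimes_j \rho_j$ is a product state, so the expectation of any single-site operator factorizes across the tensor product. First I would reduce the global trace to a single-qubit one: since $\Tr[\rho_j] = 1$ for every $j \neq i$, we have $\Tr[\Pi^X_i \rho] = \Tr[\Pi^X_i \rho_i]$.

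Next I would compute $\Tr[\rho_i X_i]$ directly from the Bloch form $\rho_i = \tfrac{1}{2}(\Id + x_i X_i + \alpha_i s_i Z_i)$, using the standard Pauli trace identities $\Tr[\Id] = 2$, $\Tr[X] = \Tr[Z] = 0$, $\Tr[X^2] = 2$, and $\Tr[XZ] = 0$. This gives $\Tr[\rho_i X_i] = x_i$, confirming the marginal claim already stated in the algorithm's description. Combining, $\Tr[\Pi^X_i \rho] = \tfrac{1}{2}(1 + x_i)$; multiplying by $h_i$ yields $\Tr[h_i \Pi^X_i \rho] = h_i \tfrac{1+x_i}{2} = \SDP_X[i]$, and summing over $i \in V$ produces the claimed equality with $\SDP_X$.

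There is no real obstacle here: the lemma simply records the tightness that was built into Algorithm~\ref{alg:B} by construction. The purpose of separating it out is to use it later in combination with the (nontrivial) edge analysis in Lemma~\ref{alg_B_edges}, where the field-budget being saturated is precisely what forces the anticommutation constraint~(\ref{soc}) to bound the remaining $ZZ$-correlations.
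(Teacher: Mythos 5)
Your proof is correct and follows essentially the same route as the paper's: reduce to a single-qubit trace via the product-state structure, read off $\Tr[\rho_i X_i]=x_i$ from the Bloch form, and conclude $\Tr[h_i\Pi^X_i\rho]=h_i\tfrac{1+x_i}{2}=\SDP_X[i]$. The only difference is that you spell out the Pauli trace identities explicitly, which the paper leaves implicit.
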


\begin{proof}
$\Pi_i^X=(\Id+X_i)/2$ acts only on qbit $i$ and $\rho$ is a product state, so
\[
\Tr[h_i \Pi^X_i\rho]=h_i \Tr[\Pi^X_i\rho_i]=h_i \frac{1+\Tr[\rho_i X_i]}{2}=h_i \frac{1+x_i}{2} = \SDP_X[i].
\]
\end{proof}

\begin{lemma}\label{Bedges}
Let $t_{ij}:=-J_{ij}\ip{u_i}{u_j}$ and $\alpha_i = \sqrt{1-x_i^2}$ as in Algorithm \ref{alg:B}. Then for every edge $\{i,j\}$,
\[
\E\big[\Tr[\Pi_{ij}^{Z}\rho]\big] = \frac{1+\alpha_i\alpha_j K(t_{ij})}{2}.
\]
\end{lemma}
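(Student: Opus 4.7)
The proof is essentially a computation that combines the product-state structure of $\rho$ with Grothendieck's identity from Lemma~\ref{grothendieck}, so the plan is short.

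First, I would expand $\Pi^Z_{ij}=\frac12(\Id-J_{ij}Z_iZ_j)$ and use the fact that $\rho=\bigotimes_k \rho_k$ is a product state to factor the $ZZ$-expectation:
\[
\Tr[\Pi_{ij}^Z\rho]=\frac{1-J_{ij}\,\Tr[\rho_i Z_i]\,\Tr[\rho_j Z_j]}{2}.
\]
From the Bloch form specified in Algorithm~\ref{alg:B}, one reads off $\Tr[\rho_i Z_i]=s_i\alpha_i$ and similarly for $j$, giving
\[
\Tr[\Pi_{ij}^Z\rho]=\frac{1-J_{ij}\alpha_i\alpha_j\,s_is_j}{2}.
\]

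Next I would take expectation over the Gaussian $g$. The key observation is that $\alpha_i=\sqrt{1-x_i^2}$ and $\alpha_j=\sqrt{1-x_j^2}$ are \emph{deterministic} (they depend only on the SDP solution, not on $g$), whereas the randomness sits entirely in the signs $s_i,s_j$. Hence $\alpha_i\alpha_j$ pulls out of the expectation, leaving $\E[s_is_j]$, which by Lemma~\ref{grothendieck} equals $K(\ip{u_i}{u_j})=K(c_{ij})$.

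Finally, I would use that $K(t)=\frac{2}{\pi}\arcsin(t)$ is an odd function together with $J_{ij}\in\{\pm 1\}$ to absorb the sign: $-J_{ij}K(c_{ij})=K(-J_{ij}c_{ij})=K(t_{ij})$. Putting the pieces together yields the claimed identity
\[
\E\bigl[\Tr[\Pi_{ij}^Z\rho]\bigr]=\frac{1+\alpha_i\alpha_j K(t_{ij})}{2}.
\]
There is no real obstacle here; the only thing to be careful about is distinguishing the deterministic SDP-derived quantities ($x_i,\alpha_i,c_{ij},t_{ij}$) from the random signs $s_i$, so that Grothendieck's identity is applied to the right quantity.
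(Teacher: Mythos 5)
Your proposal is correct and follows essentially the same route as the paper's proof: expand $\Pi^Z_{ij}$, factor the $ZZ$-expectation over the product state to get $\frac{1-J_{ij}\alpha_i\alpha_j s_is_j}{2}$, pull out the deterministic factor $\alpha_i\alpha_j$, and apply Grothendieck's identity together with the oddness of $K$ to obtain $\E[-J_{ij}s_is_j]=K(t_{ij})$. Your explicit remark that $\alpha_i,\alpha_j$ are deterministic while the randomness lies only in the signs is exactly the point the paper relies on implicitly.
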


\begin{proof}
Because $\rho = \bigotimes_{k} \rho_k$ is a product state, we have that $\langle Z_iZ_j\rangle_{\rho}= \Tr[\rho_i Z_i] \Tr[\rho_j Z_j] = (\alpha_i s_i)(\alpha_j s_j)$.
Therefore 
\[
\Tr[\Pi_{ij}^{Z}\rho]=\frac{1 - J_{ij}\alpha_i\alpha_j s_is_j}{2}.
\]
Taking expectation and applying Grothendick's Identity (Lemma~\ref{grothendieck}) as in Lemma~\ref{alg_A_edges} gives $\E[-J_{ij}s_is_j]=K(t_{ij})$ and hence the desired bound.
\end{proof}

We also need the following Lemma that relates the SDP Ising values $t_{ij}$ with the``leftover" rounded values $\alpha_i \alpha_j$:

\begin{lemma}\label{alpha_t}
If $t_{ij}\ge 0$ then $\alpha_i\alpha_j \ge t_{ij}^2$.
\end{lemma}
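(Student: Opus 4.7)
The plan is to read off the inequality directly from the SOC constraints in Definition~\ref{def:socsdp}, observing that the proof reduces to a simple manipulation of the two constraints incident to edge $\{i,j\}$.

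First I would note that since $J_{ij}\in\{\pm 1\}$, we have $t_{ij}^2 = J_{ij}^2 c_{ij}^2 = c_{ij}^2$. The SOC inequalities \eqref{soc} then give
\[
1 - x_i^2 \;\geq\; c_{ij}^2 \;=\; t_{ij}^2, \qquad 1 - x_j^2 \;\geq\; c_{ij}^2 \;=\; t_{ij}^2.
\]
By definition $\alpha_i = \sqrt{1 - x_i^2}$ and $\alpha_j = \sqrt{1 - x_j^2}$, both of which are nonnegative.

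Next, since the hypothesis $t_{ij}\ge 0$ means $t_{ij} = |t_{ij}|$, taking nonnegative square roots in each of the two displayed inequalities yields $\alpha_i \geq t_{ij}$ and $\alpha_j \geq t_{ij}$. Multiplying these two nonnegative inequalities gives $\alpha_i\alpha_j \geq t_{ij}^2$, which is the desired bound.

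There is essentially no obstacle here: the lemma is an immediate consequence of the fact that the SOC constraints we added to the relaxation encode exactly the anticommutation tradeoff between $\langle X_i\rangle$ and $\langle Z_iZ_j\rangle$ from Lemma~\ref{anticomm}, so the ``$Z$-budget'' $\alpha_i$ left after spending $x_i$ on the field term is at least as large as the $ZZ$-correlation $|c_{ij}| = |t_{ij}|$ allowed by the relaxation on any incident edge. The only subtlety worth flagging in the write-up is that the hypothesis $t_{ij}\ge 0$ is used solely to pass from $\alpha_i\alpha_j \ge t_{ij}^2$ being phrased with a signed $t_{ij}$ rather than $|t_{ij}|$; the underlying inequality $\alpha_i\alpha_j \ge t_{ij}^2$ in fact holds unconditionally, but the sign assumption will matter when this lemma is plugged into the subsequent rounding analysis (presumably because $K(t_{ij})$ is treated differently for $t_{ij}\ge 0$ than for $t_{ij}<0$).
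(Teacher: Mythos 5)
Your proof is correct and follows essentially the same route as the paper: read off $1-x_i^2 \ge c_{ij}^2$ and $1-x_j^2 \ge c_{ij}^2$ from the SOC constraints, take square roots, and multiply, using $t_{ij}^2 = c_{ij}^2$. Your side remark that the inequality holds unconditionally (the hypothesis $t_{ij}\ge 0$ being needed only downstream) is also accurate and matches the paper's argument, which works with $|c_{ij}|$ directly.
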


\begin{proof}
The SOC anticommutation constraint \eqref{soc} gives us $x_i^2+c_{ij}^2\le 1$ from which we see that $\alpha_i=\sqrt{1-x_i^2}\geq |c_{ij}|$, and similarly $\alpha_j\ge |c_{ij}|$.
Thus $\alpha_i\alpha_j\ge c_{ij}^2=t_{ij}^2$.
\end{proof}

\begin{definition}\label{beta}
For $t\in[0,1]$ let
\[
R_B(t) =\frac{1+t^2K(t)}{1+t},
\qquad
\beta  = \min_{t\in[0,1]} R_B(t).
\]
Basic calculus (setting the derivative to zero) shows that $\beta \approx 0.716775$ achieved at $t\approx 0.58$.
\end{definition}
\noindent

Finally, we can analyse the contribution of the Ising terms as compared to the SDP solution.

\begin{lemma}\label{alg_B_edges}
For each edge $\{i,j\}$, Algorithm \ref{alg:B} satisfies
\[
\E\big[\Tr[\Pi_{ij}^{Z}\rho]\big] \ge \beta \cdot \frac{1+t_{ij}}{2} = \beta \cdot \SDP_E[i,j].
\] Hence the total Ising contribution is at least $\beta \cdot \SDP_E$.
\end{lemma}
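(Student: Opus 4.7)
The plan is to start from Lemma~\ref{Bedges}, which gives the exact expression
\[
\E\big[\Tr[\Pi_{ij}^Z \rho]\big] = \frac{1+\alpha_i\alpha_j K(t_{ij})}{2},
\]
so the statement reduces to proving the pointwise inequality
\[
\frac{1+\alpha_i\alpha_j K(t_{ij})}{1+t_{ij}} \;\geq\; \beta
\]
for every feasible value of $t_{ij}\in[-1,1]$ and every $\alpha_i,\alpha_j\in[0,1]$ consistent with the SOC constraints. I will handle the two sign regimes of $t_{ij}$ separately, since the way $\alpha_i\alpha_j$ interacts with $K(t_{ij})$ flips sign with $t_{ij}$.

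For the case $t_{ij}\geq 0$, both $K(t_{ij})\geq 0$ and $\alpha_i\alpha_j\geq 0$, so shrinking $\alpha_i\alpha_j$ only decreases the numerator. Lemma~\ref{alpha_t} then supplies the exact lower bound I need: $\alpha_i\alpha_j\geq t_{ij}^2$. Substituting yields
\[
\frac{1+\alpha_i\alpha_j K(t_{ij})}{1+t_{ij}} \;\geq\; \frac{1+t_{ij}^2 K(t_{ij})}{1+t_{ij}} \;=\; R_B(t_{ij}) \;\geq\; \beta,
\]
by Definition~\ref{beta}.

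For the case $t_{ij}<0$, we have $K(t_{ij})<0$ and $\alpha_i\alpha_j\in[0,1]$, so multiplying the negative quantity $K(t_{ij})$ by $\alpha_i\alpha_j\leq 1$ makes it \emph{larger} (less negative), giving $\alpha_i\alpha_j K(t_{ij})\geq K(t_{ij})$. Hence
\[
\frac{1+\alpha_i\alpha_j K(t_{ij})}{1+t_{ij}} \;\geq\; \frac{1+K(t_{ij})}{1+t_{ij}}.
\]
After the change of variable $\theta=-t_{ij}\in(0,1]$ and using that $K$ is odd, the right-hand side becomes $(1-K(\theta))/(1-\theta)$, which is precisely the classical Goemans--Williamson ratio and is therefore at least $\alpha_{\mathrm{GW}}$. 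Since $\alpha_{\mathrm{GW}}\approx 0.8786>\beta\approx 0.7168$, we conclude the inequality holds in this regime as well.

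Combining the two cases gives the per-edge bound $\E[\Tr[\Pi_{ij}^Z\rho]]\geq \beta\cdot(1+t_{ij})/2=\beta\cdot\SDP_E[i,j]$, and summing over edges, weighted by $w_{ij}\geq 0$, gives the claimed bound on the total Ising contribution. The only genuinely novel step is Case~1, where the SOC-anticommutation inequality feeds Lemma~\ref{alpha_t} to unlock the factor $t_{ij}^2$; Case~2 is essentially free because the edge-rounding inherits the full GW guarantee whenever the $ZZ$-correlation the SDP wants is ``antiferromagnetic'' in the signed sense.
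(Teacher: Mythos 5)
Your proposal is correct and takes essentially the same route as the paper: the main case $t_{ij}\ge 0$ is identical (Lemma~\ref{Bedges} plus Lemma~\ref{alpha_t} plus Definition~\ref{beta}). The only divergence is in the easy case $t_{ij}<0$, where the paper simply observes $K(t)\ge t$ on $[-1,0]$ so the ratio is at least $1$, whereas you identify $(1-K(\theta))/(1-\theta)$ with the Goemans--Williamson ratio to get the slightly weaker but still sufficient bound $\alpha_{\mathrm{GW}}>\beta$; both arguments are valid (modulo the degenerate endpoint $t_{ij}=-1$, where the claimed inequality holds trivially since its right-hand side vanishes).
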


\begin{proof}
For $t_{ij}\in[0,1]$, Lemmas ~\ref{Bedges} and~\ref{alpha_t} give
\[
\E[\Tr[\Pi_{ij}^{Z}\rho]]
=\frac{1+\alpha_i\alpha_j K(t_{ij})}{2}
\ge \frac{1+t_{ij}^2K(t_{ij})}{2}
= R_B(t_{ij})\cdot \frac{1+t_{ij}}{2}
\ge \beta\cdot \frac{1+t_{ij}}{2}.
\]
For $t_{ij}\leq 0$, note that $K(t_{ij})\le 0$ and $\alpha_i\alpha_j\le 1$, hence
$1+\alpha_i\alpha_j K(t_{ij}) \geq 1+K(t_{ij})$.
Moreover, for $t\in[-1,0]$ we have $K(t)\ge t$, so $1+K(t_{ij})\ge 1+t_{ij}$, i.e. the ratio is at least $1$.
\end{proof}

Combining Lemmas \ref{alg_B_fields} and \ref{alg_B_edges} gives the following approximation guarantee for Algorithm \ref{alg:B}:

\begin{theorem}\label{thm:B}
The product state $\rho^{(B)}$ output by Algorithm \ref{alg:B} satisfies
\[
\E[\Tr(\htfimp\rho^{(B)})] \geq \beta\cdot \SDP_E + 1\cdot \SDP_X.
\]
\end{theorem}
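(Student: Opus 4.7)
The plan is to directly combine the two per-term lemmas (Lemma \ref{alg_B_fields} for the transverse-field terms and Lemma \ref{alg_B_edges} for the Ising terms) via linearity of expectation, since by definition
\[
\htfimp = \sum_{\{i,j\}\in E} w_{ij}\,\Pi_{ij}^{Z} + \sum_{i\in V} h_i\,\Pi_i^{X}
\]
is a nonnegative-weighted sum of local PSD constraints whose expectations under $\rho^{(B)}$ have already been individually controlled.

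First I would write
\[
\E\bigl[\Tr(\htfimp \rho^{(B)})\bigr] = \sum_{\{i,j\}\in E} w_{ij}\,\E\bigl[\Tr[\Pi_{ij}^{Z}\rho^{(B)}]\bigr] + \sum_{i\in V} h_i\,\E\bigl[\Tr[\Pi_i^{X}\rho^{(B)}]\bigr],
\]
using linearity of trace and of expectation (the only randomness is the Gaussian $g$, and the field terms are actually deterministic functions of the SDP solution).

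Next, by Lemma \ref{alg_B_fields} each field term contributes exactly $h_i (1+x_i)/2 = \SDP_X[i]$, so the field block sums to $\sum_{i\in V} \SDP_X[i] = \SDP_X$, giving the coefficient $1$ in the bound. By Lemma \ref{alg_B_edges} each edge term contributes at least $\beta\cdot \SDP_E[i,j]$ (the edge lemma already handles the sign cases $t_{ij}\ge 0$ via Lemma \ref{alpha_t} and the definition of $\beta$, and $t_{ij}<0$ where the ratio is at least $1\ge\beta$), so the edge block sums to at least $\beta \sum_{\{i,j\}\in E} \SDP_E[i,j] = \beta\cdot \SDP_E$.

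Adding the two bounds gives the stated inequality
\[
\E\bigl[\Tr(\htfimp \rho^{(B)})\bigr] \geq \beta\cdot \SDP_E + 1\cdot \SDP_X.
\]
There is no real obstacle here: all the nontrivial work (the anticommutation SOC constraint, Grothendieck's identity, the bound $\alpha_i\alpha_j\ge t_{ij}^2$, and the optimization defining $\beta$) has already been discharged in the preceding lemmas, so the proof reduces to an aggregation step of a few lines.
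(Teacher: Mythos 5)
Your proposal is correct and matches the paper exactly: the paper states Theorem \ref{thm:B} as an immediate consequence of combining Lemma \ref{alg_B_fields} and Lemma \ref{alg_B_edges} by linearity, which is precisely your aggregation argument.
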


\subsection{Combining Algorithms \ref{alg:A} \& \ref{alg:B}}

\noindent\textbf{Proof idea.}
Algorithm \ref{alg:A} is strong on edges (factor $\alpha_{\mathrm{GW}}$) but weak on fields (trivial factor of only $\nicefrac{1}{2}$) while Algorithm \ref{alg:B} is perfect on fields (factor $1$) but weaker on edges (factor $\beta$). The SOC-SDP value decomposes into an ``edge portion'' $\SDP_Z$ and a ``field portion'' $\SDP_X$. Let $p$ to be the fraction of the SDP coming from edges allows us to express each candidate's guarantee into an affine function of $p$. Taking the better candidate yields the maximum of these affine functions and the worst-case over instances is the minimum of that maximum over $p\in[0,1]$.

\begin{theorem}[Two-candidate product rounding]\label{thm:A+B}
The algorithm which on an input TFIM instance executes Algortihms A and B and returns whichever of the two candidate outputs has higher energy achieves approximation ratio
\[\gamma = \frac{1}{2} + \frac{1}{2}\cdot \frac{\alpha_{\mathrm{GW}}-1/2}{\alpha_{\mathrm{GW}}+1/2 - \beta}\approx 0.786\]

(where $\alpha_{\mathrm{GW}}$ and $\beta$ are as defined in Definitions~\ref{GW} and \ref{beta} respectively).

\end{theorem}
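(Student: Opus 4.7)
The plan is to mirror the two-candidate analysis from the warm-up theorem, but with the improved per-candidate guarantees from Theorems \ref{thm:A} and \ref{thm:B}. First, by Theorem \ref{thm:A} the expected value of Algorithm \ref{alg:A}'s output satisfies $\E[\Tr(\htfimp \rho^{(A)})] \geq \alpha_{\mathrm{GW}}\cdot \SDP_E + \frac{1}{2}\SDP_X$, and by Theorem \ref{thm:B} the expected value of Algorithm \ref{alg:B}'s output satisfies $\E[\Tr(\htfimp \rho^{(B)})] \geq \beta \cdot \SDP_E + \SDP_X$. By Lemma \ref{sdpopt}, we have $\SDP \geq \OPT$, so it suffices to lower bound the ratio of the best candidate to $\SDP = \SDP_E + \SDP_X$.

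Next, I would introduce the parameter $p := \SDP_E / \SDP \in [0,1]$, so $1-p = \SDP_X/\SDP$. Dividing both guarantees by $\SDP$ yields
\[
\frac{\E[\Tr(\htfimp \rho^{(A)})]}{\SDP} \geq \alpha_{\mathrm{GW}}\, p + \tfrac{1}{2}(1-p) = \tfrac{1}{2} + p\bigl(\alpha_{\mathrm{GW}} - \tfrac{1}{2}\bigr),
\]
\[
\frac{\E[\Tr(\htfimp \rho^{(B)})]}{\SDP} \geq \beta\, p + (1-p) = 1 - p(1-\beta).
\]
The first bound is affine and increasing in $p$ (since $\alpha_{\mathrm{GW}} > 1/2$); the second is affine and decreasing in $p$ (since $\beta < 1$). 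Returning the better of the two candidates therefore gives a ratio at least
\[
\min_{p\in[0,1]}\, \max\Bigl\{\tfrac{1}{2} + p\bigl(\alpha_{\mathrm{GW}}-\tfrac{1}{2}\bigr),\; 1 - p(1-\beta)\Bigr\}.
\]

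The worst case is attained at the unique intersection point $p^*$ of the two lines. Solving $\tfrac{1}{2} + p^*(\alpha_{\mathrm{GW}} - \tfrac{1}{2}) = 1 - p^*(1-\beta)$ gives $p^* = \frac{1/2}{\alpha_{\mathrm{GW}} + 1/2 - \beta}$. Substituting back into the increasing branch yields the claimed bound
\[
\gamma = \tfrac{1}{2} + \tfrac{1}{2}\cdot \frac{\alpha_{\mathrm{GW}} - 1/2}{\alpha_{\mathrm{GW}} + 1/2 - \beta},
\]
and plugging $\alpha_{\mathrm{GW}}\approx 0.8786$ and $\beta \approx 0.7168$ gives the numerical value $\approx 0.786$. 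There is no genuine obstacle here: Theorems \ref{thm:A} and \ref{thm:B} already do all the hard work (hyperplane rounding, Grothendieck's identity, and the SOC-based bound $\alpha_i\alpha_j \geq t_{ij}^2$). The only subtlety worth noting is that the analysis is in expectation, so one should either argue that derandomization or repeated sampling produces a deterministic product state meeting the bound, or simply state the guarantee as holding in expectation over the internal randomness of the two algorithms.
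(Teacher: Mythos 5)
Your proposal is correct and follows essentially the same route as the paper: define $p=\SDP_E/\SDP$, express the two guarantees from Theorems~\ref{thm:A} and~\ref{thm:B} as an increasing and a decreasing affine function of $p$, and evaluate the max at their intersection $p^*=\frac{1/2}{\alpha_{\mathrm{GW}}+1/2-\beta}$. Your closing remark about the guarantee holding in expectation (and needing derandomization or repetition for a deterministic state) is a fair point that the paper leaves implicit.
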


\begin{proof}
Recall that we have \[\SDP=\SDP_E+\SDP_X.\]
Let \[p=\frac{\SDP_E}{\SDP}\in [0,1],\] the fraction of the optimal solution energy represented by the Ising part, and correspondingly $1-p=\frac{\SDP_X}{\SDP}$ is the fraction represented by the field part.



Now we can express the two candidate guarantees in terms of $p$:

\smallskip
\noindent\emph{Candidate A:}
From Theorem \ref{thm:A} we have
\[
\E[\Tr(H_{\mathrm{CSP}}\rho^{(A)})] \geq 
\alpha_{\mathrm{GW}}\cdot \SDP_E + \frac{1}{2} \cdot \SDP_X \Rightarrow 
\frac{\E[\Tr(H_{\mathrm{CSP}}\rho^{(A)})]}{\SDP} \geq 
\frac{1}{2} + \Big(\alpha_{\mathrm{GW}}-\frac{1}{2}\Big) \cdot p.
\]

\smallskip
\noindent\emph{Candidate B:}
From Theorem \ref{thm:B} we have
\[
\E[\Tr(H_{\mathrm{CSP}}\rho^{(B)})] \geq \beta\cdot \SDP_E + 1\cdot \SDP_X \Rightarrow 
\frac{\E[\Tr(H_{\mathrm{CSP}}\rho^{(B)})]}{\SDP} \geq 1-(1-\beta) \cdot p.
\]

\medskip
Finally, we can take  the best of the two solutions and calculate the worst-case guarantee. Then we have that
\[
\frac{\E[\Tr(H_{\mathrm{CSP}}\rho)]}{\SDP} \geq  
\max\left\{
\frac{1}{2} + \left(\alpha_{\mathrm{GW}}-\frac{1}{2}\right) \cdot p, 1-(1-\beta)\cdot p \right\}.
\]

Note that the two terms inside the maximum are respectively increasing and decreasing functions of $p$.  They intersect at $p^\star=\frac{1/2} {\alpha_{\mathrm{GW}}+1/2-\beta}$, giving an overall worst-case approximation factor of $\gamma=1-(1-\beta)p^*\approx 0.786016$.



\end{proof}

\section{An improved $0.8156$ factor based on interpolation}

For the algorithm in the previous section we considered the two options of either setting each qubit to the purely $Z$ direction in the Bloch sphere with signs chosen by rounding the SDP, or on the other hand taking the $x_i$ values from the SDP and then using the remaining slack for the $Z$ direction as before. In this section we interpolate between these two cases, by first scaling the $x_i$ values by a constant parameter $q\in [0,1]$ and then using the remaining slack for the $Z$ direction.  

\begin{algorithm}[h!]
\caption{Interpolated $X$, $ZZ$ combined solution}
\label{alg:C}
\textbf{Input:} SOC-SDP solution $(x_i,u_i)$ from Definition~\ref{def:socsdp}, parameter $q \in [0,1]$.
\begin{enumerate}
\item For each vertex $i$, set $\alpha_i(q) = \sqrt{1-(qx_i)^2}$.
\item Sample $g \sim N(0,I)$.
\item Set $s_i = \sgn(\ip{g}{u_i})$.
\item Output the product state $\rho(q)=\bigotimes_{i\in V}\rho_i$ where
\[
\rho_i = \frac{1}{2}\Big(\Id + qx_i X_i + \alpha_i(q) s_i Z_i\Big),\] (so we have that $\langle X_i\rangle = qx_i, \langle Z_i\rangle = s_i\alpha_i(q), s_i \in \{-1, +1\}$).
\end{enumerate}
\end{algorithm}

This generalises Algorithms A ($q=0$) and B ($q=1$).  We find that the optimal algorithm is obtained with $q\approx 0.6312$, yielding an approximation factor of $\gamma \approx 0.8156$.

We proceed with the analysis, in the same way as in Algorithm \ref{alg:B}. We start by analyzing the field term contribution to the objective function. 

\begin{lemma}
For each field term (vertex) $i \in V$ we have that 
\[
\Tr[h_i\Pi_i^X \rho(q)] = h_i\frac{1+ \langle X_i \rangle_{\rho(q)}}{2} = h_i\frac{1+qx_i}{2} \geq \frac{1+q}{2} \cdot \SDP_X[i].
\]  Hence the total field contribution is at least $\frac{1+q}{2} \cdot \SDP_X$.
\end{lemma}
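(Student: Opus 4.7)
The proof is essentially an elementary calculation rather than an argument with a serious obstacle, so the plan is to assemble three short pieces.

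First I would establish the two equalities. Since $\Pi_i^X = \tfrac12(\Id + X_i)$ acts only on qbit $i$ and $\rho(q)$ is a product state, we have $\Tr[h_i \Pi_i^X \rho(q)] = h_i \Tr[\Pi_i^X \rho_i] = \tfrac{h_i}{2}(1 + \Tr[\rho_i X_i])$. By the Bloch-form construction in Algorithm~\ref{alg:C}, the single-qbit reduced state has $\Tr[\rho_i X_i] = q x_i$, which gives $\langle X_i\rangle_{\rho(q)} = qx_i$ and hence the claimed pair of equalities.

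Next I would prove the inequality $h_i \tfrac{1+qx_i}{2} \geq \tfrac{1+q}{2}\,\SDP_X[i]$. Recalling that $\SDP_X[i] = h_i\,\tfrac{1+x_i}{2}$, after clearing denominators it suffices to show
\begin{equation*}
2(1+qx_i) \;\geq\; (1+q)(1+x_i).
\end{equation*}
Expanding the right-hand side and collecting terms gives $2 + 2qx_i - 1 - q - x_i - qx_i = (1-q)(1-x_i)$, so the inequality is equivalent to $(1-q)(1-x_i) \geq 0$. This holds because $q \in [0,1]$ and the SOC-SDP feasibility constraint in Definition~\ref{def:socsdp} forces $x_i \in [-1,1]$ (so both factors are nonnegative). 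In the degenerate case $h_i = 0$ both sides vanish and the inequality is trivial.

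Finally, summing the per-vertex inequality over all $i \in V$ yields the claimed bound $\sum_{i \in V} \Tr[h_i \Pi_i^X \rho(q)] \geq \tfrac{1+q}{2}\cdot \SDP_X$. The only thing worth flagging is that, unlike Algorithm~\ref{alg:B} (which matched the SDP field marginals exactly), the scaling by $q$ introduces a multiplicative loss of $\tfrac{1+q}{2}$ on the field part; this will be traded off in the subsequent edge analysis against an improved bound on the Ising contributions through the enlarged slack $\alpha_i(q) = \sqrt{1-(qx_i)^2}$.
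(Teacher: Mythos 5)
Your proof is correct and follows essentially the same route as the paper: both reduce the inequality to the elementary fact $q + x_i \le 1 + qx_i$, i.e.\ $(1-q)(1-x_i)\ge 0$. Your version of clearing denominators is marginally cleaner than the paper's (which divides by $1+x_i$ and implicitly assumes $x_i>-1$), but this is a presentational difference only.
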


\begin{proof}
The equalities are immediate since by construction $\langle X_i \rangle_{\rho(q)} = qx_i$.  Now \[h_i\frac{1+qx_i}{2} = \frac{1+qx_i}{1+x_i} \cdot \SDP_X[i] \geq \frac{1+qx_i+q+x_i}{2(1+x_i)} \cdot \SDP_X[i] = \frac{1+q}{2} \cdot \SDP_X[i],\]
since $0 \leq q,x_i \leq 1 \Rightarrow q+x_i\leq 1 + qx_i$.

\end{proof}

Now we proceed, exactly as before, with the contribution of the Ising terms (edges) to the solution energy.
\begin{lemma}
For every Ising term (edge $\{i,j\} \in E$) we have that 
\[
\E[w_{ij} \Tr[\Pi_{ij}^Z \rho(q)]] \geq \beta(q) \cdot \SDP_E[i,j], \mbox{ where } \beta(q) = \min_{t \in [0,1]} \frac{1+ \Big( (1-q^2) + q^2t^2 \Big) K(t)}{1+t}
\] Hence the total Ising contribution is at least $\beta{q} \cdot \SDP_E$.
\end{lemma}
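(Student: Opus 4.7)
The plan is to follow the same two-step template used for Lemma~\ref{alg_B_edges}, but carrying the parameter $q$ through the Bloch components. First, since $\rho(q)$ is a product state with $\langle Z_i\rangle_{\rho_i}=\alpha_i(q)s_i$, the same computation as in Lemma~\ref{Bedges} gives
\[
\E\big[\Tr[\Pi^{Z}_{ij}\rho(q)]\big]=\frac{1-J_{ij}\,\alpha_i(q)\alpha_j(q)\,\E[s_is_j]}{2}=\frac{1+\alpha_i(q)\alpha_j(q)\,K(t_{ij})}{2},
\]
where the second equality is Grothendieck's identity and the sign absorption $-J_{ij}K(\ip{u_i}{u_j})=K(t_{ij})$ (since $K$ is odd).

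Next I would derive the analogue of Lemma~\ref{alpha_t}, which is the only place that uses the SOC constraint. From $x_i^2+c_{ij}^2\le 1$ we get $(qx_i)^2\le q^2(1-t_{ij}^2)$, hence
\[
\alpha_i(q)^2=1-(qx_i)^2\ \geq\ (1-q^2)+q^2t_{ij}^2,
\]
and similarly for $j$. Since the right-hand side is in $[0,1]$, taking the product yields $\alpha_i(q)\alpha_j(q)\ge (1-q^2)+q^2 t_{ij}^2$, which is the interpolated analogue of $\alpha_i\alpha_j\ge t_{ij}^2$.

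For $t_{ij}\in[0,1]$ we have $K(t_{ij})\ge 0$, so plugging the lower bound on $\alpha_i(q)\alpha_j(q)$ into the expectation gives
\[
\E\big[\Tr[\Pi^Z_{ij}\rho(q)]\big]\ \geq\ \frac{1+\bigl((1-q^2)+q^2 t_{ij}^2\bigr)K(t_{ij})}{2}\ =\ \frac{1+\bigl((1-q^2)+q^2 t_{ij}^2\bigr)K(t_{ij})}{1+t_{ij}}\cdot\frac{1+t_{ij}}{2},
\]
and the defining ratio is at least $\beta(q)$ by minimizing over $t\in[0,1]$. Multiplying by $w_{ij}$ and summing gives the claimed bound on $\SDP_E$.

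The only thing left is the $t_{ij}\le 0$ regime, which I expect to be the mildly delicate point (exactly as in the proof of Lemma~\ref{alg_B_edges}). There $K(t_{ij})\le 0$ and the inequality on $\alpha_i(q)\alpha_j(q)$ flips sign; I would instead use $\alpha_i(q)\alpha_j(q)\le 1$ to write $1+\alpha_i(q)\alpha_j(q)K(t_{ij})\ge 1+K(t_{ij})\ge 1+t_{ij}$, where the last step uses $K(t)\ge t$ for $t\in[-1,0]$. This shows the per-edge ratio is at least $1\ge\beta(q)$ on the negative side, so it never binds in the definition of $\beta(q)$ and the minimization over $[0,1]$ suffices. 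No obstacle beyond bookkeeping arises; the essential new content is the SOC-based lower bound on $\alpha_i(q)\alpha_j(q)$, everything else is a direct reuse of the Algorithm~\ref{alg:B} argument with $x_i\mapsto qx_i$.
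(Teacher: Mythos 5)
Your proof is correct and follows essentially the same route as the paper's: the Grothendieck identity computation for $\E[\Tr[\Pi^Z_{ij}\rho(q)]]$, the SOC-derived bound $\alpha_i(q)\alpha_j(q)\ge(1-q^2)+q^2t_{ij}^2$, and division by $(1+t_{ij})/2$. If anything you are slightly more careful than the paper's own write-up of this lemma, which states the product bound with a typo ($(1-q)^2$ in place of $(1-q^2)$) and omits the explicit treatment of the $t_{ij}\le 0$ regime that you correctly carry over from the Algorithm~B analysis.
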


\begin{proof}
We proceed as in Lemma \ref{alg_B_edges}, but with the new values of $\alpha_i$.

Fix an edge $\{i,j\}$ and we have 
\[
\Tr[\Pi_{ij}^Z \rho(q)] = \frac{1-J_{ij} \langle Z_iZ_j \rangle }{2} = \frac{1 + \alpha_i(q)\alpha_j(q) (-J_{ij} \cdot s_i s_j)}{2}.
\]
As before, taking expectation over the hyperplane rounding and using again Grothendieck's Identity, we have that
\[
\E [-J_{ij} \cdot s_is_j] = K(t_{ij}) \mbox{ where } t_{ij} = -J_{ij} \langle u_i, u_j \rangle)
\]
and so
\[
\E[\Tr[\Pi_{ij}^Z \rho(q)]] = \frac{1+\alpha_i(q) \alpha_j(q) K(t_{ij})}{2}.
\]
Using Lemma \ref{anticomm}, for each endpoint incident in an edge $\{i,j\}$ we have that $x_i^2 \leq 1-t_{ij}^2$ and also $x_j^2 \leq 1- t_{ij}^2$. We use these to derive bounds on $\alpha_i(q), \alpha_j(q)$:
\[
\alpha_i(q) = \sqrt{1 - q^2x_i^2 } \geq \sqrt{1 - q^2 (1-t_{ij}^2)} = \sqrt{(1-q^2) + q^2t_{ij}^2},
\]
and similarly for $\alpha_j(q)$ giving us the bound
\[
\alpha_i(q) \alpha_j(q) \geq (1-q)^2 +q^2t_{ij}^2 
\]
and hence 
\begin{equation}\label{eq:edgebound}
\E[\Tr[\Pi_{ij}^Z \rho(q)]] \geq \frac{1+ \Big((1-q)^2 +q^2t_{ij}^2 \Big)K(t_{ij})}{2}.
\end{equation}
Since $\SDP_E[i,j]=w_{ij}\frac{1+t_{ij}}{2}$, dividing (\ref{eq:edgebound}) by $\frac{1+t_{ij}}{2}$ and minimising with respect to $t=t_{ij}$ gives the result.
\end{proof}

Note that at $q=0$ we get 
\[
\frac{1+ \Big( (1-q^2) + q^2t^2 \Big) K(t)}{1+t} = \frac{1+K(t)}{1+t} \mbox{ and so } \beta(0) = \alpha_{\mathrm{GW}},
\]
i.e. Algorithm \ref{alg:A}, and at $q=1$
\[
\frac{1+ \Big( (1-q^2) + q^2t^2 \Big) K(t)}{1+t} = \frac{1+t^2K(t)}{1+t} \mbox{ and so } \beta(1) = \beta = 0.716775,
\]
i.e. Algorithm \ref{alg:B}.

Now we proceed with the balancing analysis. As before let $p = \frac{\SDP_E}{\SDP} \in [0,1]$ and $1-p = \frac{SDP_X}{\SDP} \in [0,1]$, with $\SDP = \SDP_E + \SDP_X$. By the previous two Lemmas the approximation factor guaranteed by our algorithm depends on $p,q$ as follows:
\[
Q(p,q) = \beta(q) \cdot \SDP_E + \frac{1+q}{2} \cdot \SDP_X = p \cdot \beta(q) + (1-p)\cdot \frac{1+q}{2} = \frac{1+q}{2} + \Big[\beta(q) - \frac{1+q}{2} \Big]\cdot p. 
\]

We will choose $q$ such that this is constant as a function of $p$, giving the following:

\begin{theorem}\label{interp-opt}
    Algorithm \ref{alg:C} with $q$ taken as the root of the equation $\beta(q)=\frac{1+q}{2}$ ($q^*\approx 0.6312$) achieves approximation factor $\beta(q^*) \approx 0.8156$.
\end{theorem}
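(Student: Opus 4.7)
The approach is to view the quantity
\[
Q(p,q) = p\,\beta(q) + (1-p)\,\frac{1+q}{2}
\]
as the guaranteed approximation ratio of Algorithm~\ref{alg:C} on an instance whose SOC-SDP solution has edge-fraction $p = \SDP_E/\SDP$. Since $Q(\cdot,q)$ is affine in $p$ with values $\frac{1+q}{2}$ at $p=0$ and $\beta(q)$ at $p=1$, its minimum over $p\in[0,1]$ is simply $\min\{\beta(q),\,\tfrac{1+q}{2}\}$. The worst-case approximation ratio of the algorithm at fixed parameter $q$ is therefore this minimum, and the optimal choice of $q$ is the one maximising it.

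First I would establish the two monotonicities that force the optimum to lie at an intersection. The function $q \mapsto \frac{1+q}{2}$ is obviously increasing. For $\beta(q) = \min_{t\in[0,1]} R_q(t)$ with $R_q(t) = \frac{1+((1-q^2)+q^2 t^2)K(t)}{1+t}$, observe that for each fixed $t\in[0,1]$ we have $K(t)\ge 0$ and the coefficient $(1-q^2)+q^2 t^2 = 1-q^2(1-t^2)$ is (weakly) decreasing in $q$, so $R_q(t)$ is decreasing in $q$ pointwise, whence $\beta(q)$ is decreasing in $q$. Continuity of $\beta$ follows from continuity of $R_q(t)$ jointly in $(q,t)$ and compactness of $[0,1]$. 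Combined with the boundary values $\beta(0)=\alpha_{\mathrm{GW}}>\tfrac12$ and $\beta(1)=\beta<1$, the intermediate value theorem produces a (unique) $q^* \in (0,1)$ with $\beta(q^*) = \tfrac{1+q^*}{2}$.

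Next I would argue optimality at $q^*$. For $q<q^*$ the increasing branch $\frac{1+q}{2}$ is still below $\beta(q)$, so $\min\{\beta(q),\tfrac{1+q}{2}\} = \tfrac{1+q}{2} < \tfrac{1+q^*}{2} = \beta(q^*)$. For $q>q^*$ the roles reverse and the min equals $\beta(q) < \beta(q^*)$. Hence the worst-case guarantee is maximised at $q^*$, where $Q(p,q^*) \equiv \beta(q^*)$ is constant in $p$, giving approximation ratio $\beta(q^*)$.

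The only remaining step is the numerical evaluation. The main obstacle is that $\beta(q)$ is defined by a one-dimensional minimisation that does not admit a closed form, so computing $q^*$ requires a nested numerical calculation: for each candidate $q$, find the minimising $t(q)\in[0,1]$ of $R_q(t)$ (by solving $\partial_t R_q(t) = 0$, which reduces to an algebraic equation involving $\arcsin$ and $\sqrt{1-t^2}$), then solve $\beta(q) = \tfrac{1+q}{2}$ in $q$ by bisection. Carrying this out to the needed precision yields $q^* \approx 0.6312$ and $\beta(q^*) \approx 0.8156$, completing the theorem.
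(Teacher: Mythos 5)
Your proposal is correct and follows essentially the same route as the paper: both express the guarantee as the affine function $Q(p,q)=p\,\beta(q)+(1-p)\tfrac{1+q}{2}$ and choose $q^*$ to equalize the two endpoint values $\beta(q)$ and $\tfrac{1+q}{2}$, making the guarantee constant in $p$ and equal to $\beta(q^*)\approx 0.8156$. Your additional monotonicity and intermediate-value argument for the existence, uniqueness, and optimality of $q^*$ is a welcome elaboration of a step the paper leaves implicit, but it is not a different approach.
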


In principle we could choose a scaling factor $q$ depending on the value of $p$ obtained by solving the SDP, and this would give worst-case approximation factor of  
\[
\delta = \min_{p \in [0,1]} \max_{q \in [0,1]} Q(p,q). 
\]

However, it turns out that the optimal value of $q$ varies continuously as a function of $p$ between 0 and 1.  Hence in particular there is some $p$ ($p\approx 0.7094$) for which the optimal value of $q$ is $q^*$ as in Theorem \ref{interp-opt}, and so in fact $\delta = \beta(q^*)$ and there is no worst-case advantage in choosing $q$ depending on $p$.

The optimal $q$ and the resulting approximation factor depending on the value of $p$ are shown in Figure \ref{fig:optqrat}; note the phase transition between $q=1$ and $q<1$, which occurs at $p\approx 0.602$.

\begin{figure}[h]
\centering
\includegraphics[width=0.47\textwidth]{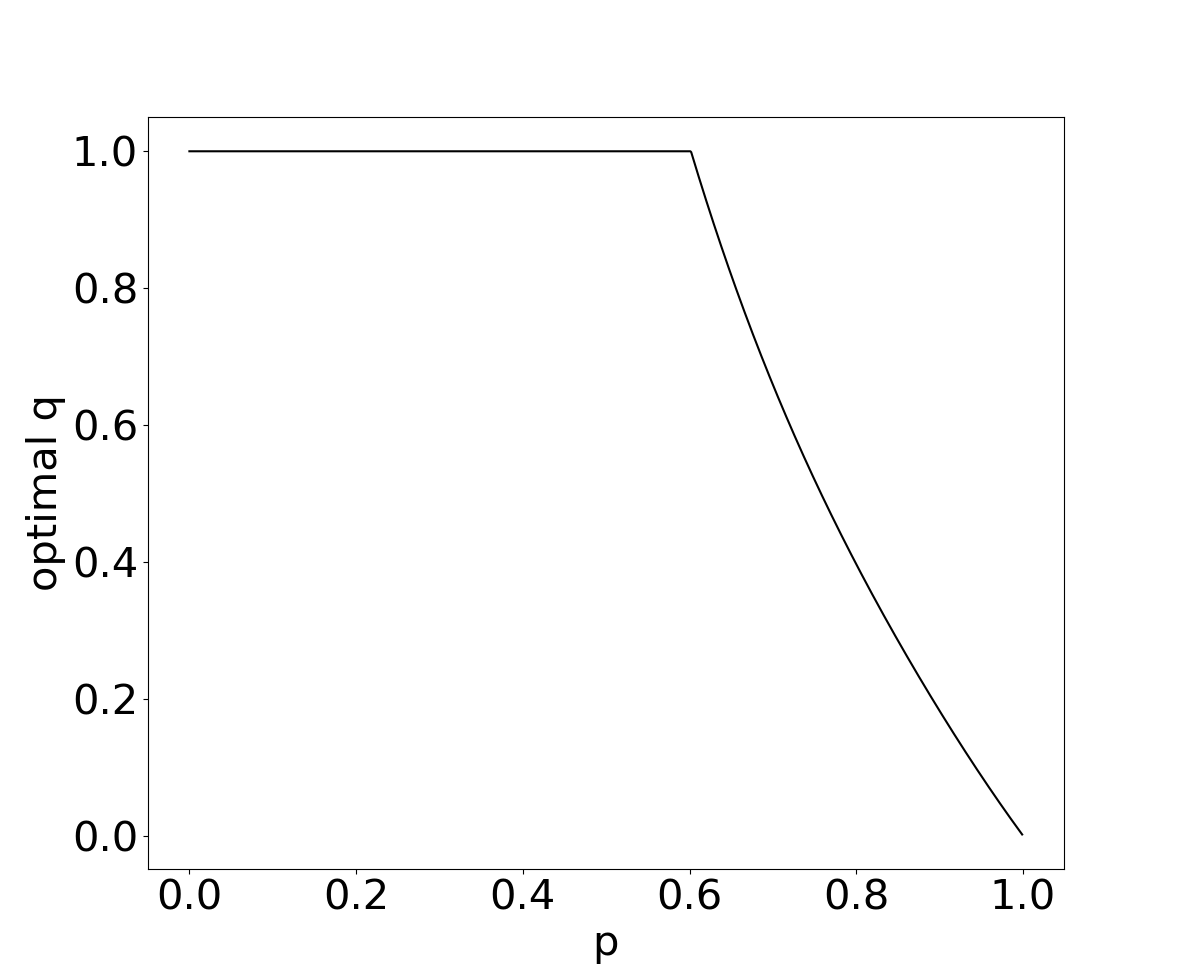} \includegraphics[width=0.52\textwidth]{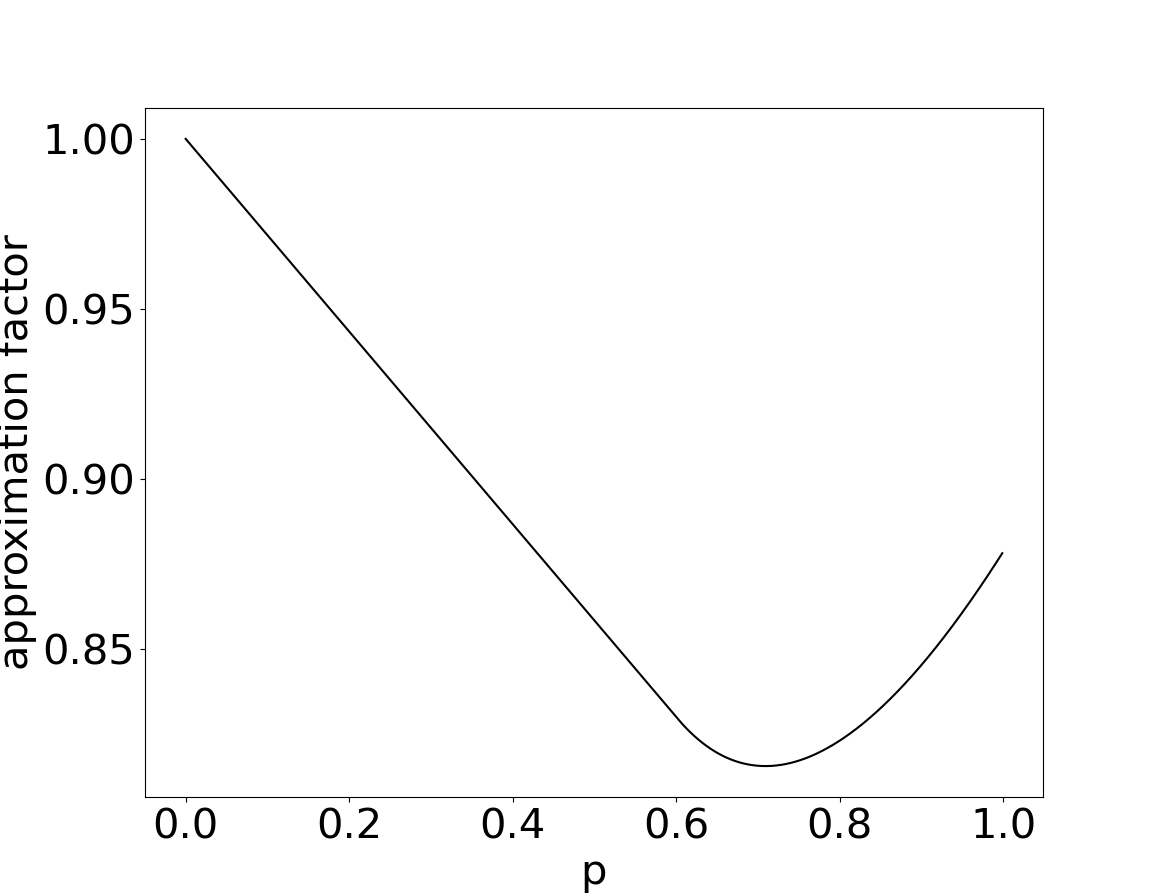}
\caption{Optimal $q$ and resulting approximation factor for varying $p$.}\label{fig:optqrat}
\end{figure}

\section{A product-state $169/180$ optimality gap on a triangle instance}

In this section, we give a concrete $3$-qbit purely ferromagnetic TFIM instance where product states provably cannot achieve the true optimum. This gives an upper bound for any method that is restricted to output product states.

Let $V=\{0,1,2\}$ and $E=\{\{0,1\},\{1,2\},\{0,2\}\}$.
Set unit edge weights $w_{ij}=1$ and uniform field weights $h_i=g$ with
\[
g =\frac{3}{5}.
\]
We take the signs
\[
J_{01}=J_{12}=J_{02}= +1.
\]

Let $H_t$ denote the corresponding $H_{\mathrm{CSP}}$ Hamiltonian on this instance. This is a $3$ qbit Hamiltonian and as such a $2^3\times 2^3$ Hermitian matrix. A direct diagonalization
gives the spectrum
\[
\left\{\frac{18}{5},\ \frac{16}{5}\ (\times 2),\ \frac{13}{5}\ (\times 2),\ \frac{8\pm \sqrt{19}}{5},\ \frac{4}{5}\right\},
\]
so the maximum eigenvalue is $\frac{18}{5}$.

The interesting direction is to upper bound the value achievable by any product state.

\begin{lemma}
Let $\OPT_{\mathrm{prod}} =\max_{\rho\ \mathrm{product}}\Tr[H_t\rho]$. For this instance 
\[
\OPT_{\mathrm{prod}}=\frac{169}{50}.
\]
\end{lemma}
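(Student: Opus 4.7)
The plan is to reduce the 8-parameter product-state maximization to a 2-variable calculus problem and then classify the critical points of the reduced objective.

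First, parameterize each qubit as $\rho_i=\tfrac{1}{2}(\Id+x_iX_i+y_iY_i+z_iZ_i)$ with $x_i^2+y_i^2+z_i^2\le 1$. Since $H_t$ only involves Pauli $X$'s and $Z$'s, the objective is independent of the $y_i$, so take $y_i=0$; monotonicity of the field term in $x_i$ then forces $x_i=\sqrt{1-z_i^2}\ge 0$ at the optimum. The problem becomes maximizing
\[
F(z_0,z_1,z_2)=\tfrac{12}{5}-\tfrac{1}{2}\sum_{i<j}z_iz_j+\tfrac{3}{10}\sum_i\sqrt{1-z_i^2}
\]
on $[-1,1]^3$. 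Using the $S_3$ permutation symmetry, the $\mathbb{Z}_2$ sign-flip symmetry (flipping the signs of any two $z_i$ preserves $F$), and frustration (at most two of the three products $z_iz_j$ can be negative), a direct case analysis shows the minimum of $\sum z_iz_j$ at fixed magnitudes $\alpha_i=|z_i|$ is attained by flipping the sign of the qubit with the largest $\alpha$. Relabelling so that $\alpha_2$ is largest yields
\[
F=\tfrac{12}{5}+\tfrac{1}{2}(\alpha_0\alpha_2+\alpha_1\alpha_2-\alpha_0\alpha_1)+\tfrac{3}{10}\sum_i\sqrt{1-\alpha_i^2},\quad \alpha_i\in[0,1].
\]

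Setting $\partial F/\partial\alpha_2=0$ admits the closed form $\alpha_2=5(\alpha_0+\alpha_1)/\sqrt{9+25(\alpha_0+\alpha_1)^2}$; substituting back and simplifying (a pleasing cancellation combines two terms into a single square root) produces
\[
F^*(\alpha_0,\alpha_1)=\tfrac{12}{5}+\tfrac{\sqrt{9+25(\alpha_0+\alpha_1)^2}}{10}-\tfrac{\alpha_0\alpha_1}{2}+\tfrac{3}{10}\bigl(\sqrt{1-\alpha_0^2}+\sqrt{1-\alpha_1^2}\bigr).
\]
The lemma reduces to showing $F^*\le 169/50$ on $[0,1]^2$. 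Interior critical points of $F^*$ correspond to interior critical points of the original $F$; classifying the latter via the auxiliary function $g(z)=z-\tfrac{3z}{5\sqrt{1-z^2}}$ (the conditions $\partial F/\partial z_i=0$ are equivalent to $g(z_i)=\sum_j z_j$ for each $i$) yields only the all-zero point (value $33/10$) and, up to the $S_3$ action, a single Type-2 orbit where the two smallest $|z_i|$ are equal — in other words, interior critical points of $F^*$ lie on the diagonal $\alpha_0=\alpha_1$.

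On the boundary $\alpha_0=0$, the condition $\partial F^*/\partial\alpha_1=0$ simplifies to $25\sqrt{1-\alpha_1^2}=3\sqrt{9+25\alpha_1^2}$; squaring gives $\alpha_1=4/5$ and $F^*=\tfrac{12}{5}+\tfrac{1}{2}+\tfrac{3}{10}\cdot\tfrac{8}{5}=\tfrac{169}{50}$ exactly. The main obstacle is the diagonal case, since the interior critical value of $F^*(p,p)$ is numerically only about $10^{-3}$ below $169/50$. Substituting $w=\sqrt{1-p^2}$, the bound $F^*(p,p)<169/50$ is equivalent to $P(w)<0$ for the cubic $P(w)=125w^3+225w^2-30w-222$, while the critical-point condition gives the quartic $Q(w)=2500w^4+3000w^3+675w^2-3270w-981=0$. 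The polynomial-division identity
\[
Q(w)=(20w-12)P(w)+15R(w),\quad R(w)=265w^2+54w-243
\]
gives $P(w^*)=-15R(w^*)/(20w^*-12)$ at any critical $w^*$; direct computation of $Q$ at $w=3/5$ and at the positive root $w_R$ of $R$ shows both are negative, while $Q(1)>0$, so $w^*\in(w_R,1)$, whence $R(w^*)>0$ and $20w^*-12>0$, forcing $P(w^*)<0$ strictly and completing the proof.
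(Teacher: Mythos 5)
Your route is genuinely different from the paper's: the paper finds the critical points $(0,0,0)$ and $(0,\pm 4/5,\mp 4/5)$ of $F$ and then dismisses the remaining critical points (those with all $z_i\neq 0$) by appeal to numerical optimization, whereas you try to close exactly that case rigorously by eliminating one variable and running a polynomial argument on the diagonal. Most of your computations check out: the partial maximization over $\alpha_2$ and the identity $\tfrac12 S\alpha_2^{\mathrm{opt}}+\tfrac{3}{10}\sqrt{1-(\alpha_2^{\mathrm{opt}})^2}=\sqrt{9+25S^2}/10$ are correct; the boundary point $\alpha_1=4/5$ giving exactly $169/50$ is correct (it is the paper's critical point (ii)); the quartic $Q$, the cubic $P$, the identity $Q=(20w-12)P+15R$, and the sign evaluations $Q(3/5)=-1728$, $Q(1)=1924$, $Q(w_R)\approx -6$ all verify.

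The genuine gap is the sentence asserting that the classification of interior critical points of $F$ ``yields only the all-zero point and \ldots a single Type-2 orbit,'' i.e.\ that interior critical points of $F^*$ lie on the diagonal $\alpha_0=\alpha_1$. This is the linchpin that reduces the two-dimensional interior to a one-variable problem, and you do not prove it. The stationarity conditions say $g(z_0)=g(z_1)=g(z_2)=\sigma:=\sum_j z_j$, and for $\sigma\in(0,g_{\max})$ the equation $g(z)=\sigma$ has \emph{three} distinct roots $r_1<0<r_2<r_3$; a critical point with $\{z_0,z_1\}=\{r_2,r_3\}$, $z_2=r_1$ would be an off-diagonal interior critical point of $F^*$, and nothing in your write-up excludes it. (Subtracting the two stationarity equations of $F^*$ only gives $g(\alpha_0)=g(\alpha_1)$, which does \emph{not} force $\alpha_0=\alpha_1$ since $g$ is not injective on $(0,1)$.) The claim is true and fillable: squaring $g(z)=\sigma$ gives the quartic $25(z-\sigma)^2(1-z^2)=9z^2$, whose four roots sum to $2\sigma$ by Vieta; since $z=\sigma$ is not a root (the left side vanishes there but the right side is $9\sigma^2\neq 0$), the extraneous fourth root is $\neq\sigma$ and hence $r_1+r_2+r_3\neq\sigma$, ruling out the three-distinct-roots case. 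You should add this (or an equivalent) argument. Smaller repairs: your ``$\mathbb{Z}_2$'' claim that flipping two signs preserves $F$ is false (only the global flip is a symmetry) --- but your case analysis of which qubit to flip is correct and carries the weight; the faces $\alpha_i=1$ of the square are never examined (excluded by the paper's perturbation argument at $|z_i|=1$, or by a one-line derivative check); the stated unconditional equivalence $F^*(p,p)<169/50\iff P(w)<0$ is false (e.g.\ at $w=1$) and holds only after substituting the critical-point relation $\sqrt{109-100w^2}=50w/(3+5w)$, which is where you use it; and locating $w^*$ from three sign evaluations of a quartic needs the extra observation that $Q$ has exactly one positive root (immediate from Descartes' rule, since its coefficient signs are $+,+,+,-,-$).
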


\begin{proof}
We have:
\[
\OPT_{\mathrm{prod}} = \max_{\rho = \rho_1 \otimes \rho_2 \otimes \rho_3} \Tr[H_t \rho].
\]

\noindent 


We will represent each $\rho_i$ by its Bloch vector $(x_i,y_i,z_i)$; note that trivially the optimum is reached at a pure state so we have $x_i^2+y_i^2+z_i^2=1$.  Further, we may assume that $x_i\geq 0$ and $y_i=0$, since otherwise replacing $(x_i,y_i,z_i)$ with $(\sqrt{x_i^2+y_i^2},0,z_i)$ improves the solution.  Hence also we have $x_i = \sqrt{1-z_i^2}$.


Now we will write the objective as a function of $(z_0,z_1,z_2)$.
Since $\langle Z_iZ_j \rangle = z_iz_j$ and $\langle X_i \rangle =x_i$ we have
\[
\Tr[H_t\rho] = \sum_{\{i,j\}\in E}\frac{1-J_{ij}z_i z_j}{2} +\frac{g}{2}\sum_{i=0}^2 (1+x_i).
\]
Putting $J_{ij}=+1$ for all edges, $g=\nicefrac{3}{5}$ and $x_i = \sqrt{1-z_i^2}$ we obtain
\[
F(z)
=
\frac{1}{2}(1-z_0z_1)+\frac{1}{2}(1-z_1z_2)+\frac{1}{2}(1-z_0z_2) +\frac{3}{10}\sum_{i=0}^2 (1+\sqrt{1-z_i^2}),
\]
and so our optimization problem is transformed to a simple optimization problem in the cube $[-1,1]^3$:
\[
\OPT_{\mathrm{prod}} = \max_{|z_i| \leq 1} F(z_0,z_1,z_2).
\]

Since $g>0$, an optimal solution cannot have $|z_i|=1$ (equivalently $x_i=0$) since otherwise pertubing that qbit slightly so that $z_i$ decreases by $\epsilon$ and $x_i$ increases by $\sqrt{\epsilon}$ changes the edge terms only by $\approx \epsilon^2$ but increases the field contribution by $\epsilon$.

Now we can differentiate $F$ with respect to each $z_i$. The first-order stationary conditions gives us the system 
\[
x_0(z_1+z_2)=-gz_0,\qquad
x_1(z_0+z_2)=-gz_1,\qquad
x_2(z_0+z_1)=-gz_2,
\qquad (g=3/5).
\]
Solving this system over $|z_i|<1$ gives the following critical points:
\begin{itemize}
\item[(i)] $z_0=z_1=z_2=0$ (hence $x_0=x_1=x_2=1$);
\item[(ii)] up to relabeling, one coordinate is $0$ and the other two are opposite:
\[
z_a=0,\qquad z_b=-z_c= +/- \frac{4}{5},
\qquad
x_a=1,\qquad x_b=x_c=\frac{3}{5}.
\]
\end{itemize}

Now we evaluate $F$ on these candidates solutions. At $(z_0,z_1,z_2)=(0,0,0)$ we have $x_i=1$ and thus
\[
F=\frac{3}{2}+\frac{3}{10}\cdot 3\cdot 2=\frac{33}{10}=3.3.
\]
At (ii), say $(z_0,z_1,z_2)=(0,4/5,-4/5)$ with $(x_0,x_1,x_2)=(1,3/5,3/5)$, we get
\[
\text{edges}
=\frac12(1-0)+\frac12\Bigl(1-\Bigl(-\frac{16}{25}\Bigr)\Bigr)+\frac12(1-0)
=\frac{91}{50},
\]
and
\[
\text{fields}
=\frac{3}{10}\Bigl[(1+1)+\Bigl(1+\frac35\Bigr)+\Bigl(1+\frac35\Bigr)\Bigr]
=\frac{78}{50}.
\]
So $F=91/50+78/50=169/50$. 

Also, there are also solutions with all $z_i \neq 0$, but numerical optimization shows that they give worse bounds with respect to F.
\end{proof}
A product state that matches the value $169/50$ is:
\[
\ket{\rho_{\mathrm{OPT}_{\mathrm{prod}}}}
=
\frac{1}{\sqrt{2}}
\left( \ket{0} + \ket{1} \right)
\otimes
\left( \sqrt{0.9}\,\ket{0} + \sqrt{0.1}\,\ket{1} \right)
\otimes
\left( \sqrt{0.1}\,\ket{0} + \sqrt{0.9}\,\ket{1} \right),
\]
\begin{corollary}
For this instance,
\[
\frac{\OPT_{\mathrm{prod}}}{\OPT}
=\frac{169/50}{18/5}
=\frac{169}{180}
\approx 0.938888.
\]
Consequently, no algorithm restricted to output product states can guarantee a worst-case approximation ratio
exceeding $169/180$ for TFIM instances.
\end{corollary}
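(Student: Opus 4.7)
The proof is essentially a short assembly: combine the two numerical ingredients already established in the section and then draw the immediate consequence for product-state approximation algorithms.

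The plan is to first invoke the explicit diagonalization performed at the start of the section, which gives the spectrum of $H_t$ and in particular $\OPT = \lambda_{\max}(H_t) = 18/5$. Next, I would invoke the preceding Lemma, which establishes $\OPT_{\mathrm{prod}} = 169/50$ by enumerating the Bloch-sphere critical points. Dividing these two quantities is routine arithmetic: $(169/50)/(18/5) = 169 \cdot 5 / (50\cdot 18) = 845/900 = 169/180$. One may optionally cross-check by plugging in the explicit product state $\ket{\rho_{\OPT_{\mathrm{prod}}}}$ displayed just above the corollary, which attains $169/50$, to confirm the product-state bound is tight.

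For the second, ``consequently'' clause, the argument is an instance-based lower bound on approximation loss. Suppose, for contradiction, that some classical (possibly randomized) algorithm $\mathcal{A}$ always outputs a product state $\rho_{\mathcal{A}}$ with $\E[\Tr(\htfimp\rho_{\mathcal{A}})] > (169/180)\cdot \OPT$ on every TFIM instance. Applying $\mathcal{A}$ to the specific triangle instance defined in this section yields a product state whose expected energy strictly exceeds $(169/180)\cdot (18/5) = 169/50 = \OPT_{\mathrm{prod}}$. But by the previous Lemma, no product state (pure or mixed, since the mixed optimum is attained at a pure state by convexity) can exceed $169/50$ on this instance, a contradiction.

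I do not anticipate a real obstacle here, since both numerical ingredients have already been verified: the main work was done in diagonalizing $H_t$ (a one-off $8\times 8$ computation) and in solving the KKT system for $F(z_0,z_1,z_2)$ over $[-1,1]^3$ inside the Lemma. The only care needed is to state the impossibility direction precisely as a worst-case bound, and to note that the restriction to pure product states in the Lemma is without loss of generality because $\Tr[\htfimp\,\cdot\,]$ is linear in $\rho$ and the set of product states has as its extreme points precisely the pure product states.
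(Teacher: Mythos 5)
Your proposal is correct and matches the paper's (implicit) argument exactly: the corollary is just the ratio of the two quantities established earlier in the section ($\OPT=18/5$ from the diagonalization and $\OPT_{\mathrm{prod}}=169/50$ from the lemma), followed by the standard instance-based impossibility argument for the ``consequently'' clause. Your added remark that the pure-state restriction is without loss of generality by linearity of $\Tr[H_t\,\cdot\,]$ over the product-state set is a sensible clarification but not a departure from the paper.
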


\bibliographystyle{plainurl}
\bibliography{references}

@article{PhysRevLett.119.100503,
  title = {Polynomial-Time Classical Simulation of Quantum Ferromagnets},
  author = {Bravyi, Sergey and Gosset, David},
  journal = {Phys. Rev. Lett.},
  volume = {119},
  issue = {10},
  pages = {100503},
  numpages = {5},
  year = {2017},
  month = {Sep},
  publisher = {American Physical Society},
  doi = {10.1103/PhysRevLett.119.100503},
  url = {https://link.aps.org/doi/10.1103/PhysRevLett.119.100503}
}

@article{bravyi2014complexityquantumisingmodel,
      title={On complexity of the quantum Ising model}, 
      author={Sergey Bravyi and Matthew Hastings},
journal = { Communications in Mathematical Physics },
      year={2017},
     issue = {11},
volume = {349}, 
pages = {1–45}
}

@misc{apte202508395approximationalgorithmeprproblem,
      title={A 0.8395-approximation algorithm for the EPR problem}, 
      author={Anuj Apte and Eunou Lee and Kunal Marwaha and Ojas Parekh and Lennart Sinjorgo and James Sud},
      year={2025},
      eprint={2512.09896},
      archivePrefix={arXiv},
      primaryClass={quant-ph},
      url={https://arxiv.org/abs/2512.09896}, 
}

@article{BranDaoHarrow2016,
  author    = {Fernando G. S. L. Brand{\~a}o and Aram W. Harrow},
  title     = {Product-state approximations to quantum states},
  journal   = {Communications in Mathematical Physics},
  volume    = {342},
  pages     = {47--80},
  year      = {2016},
  doi       = {10.1007/s00220-016-2575-1},
  url       = {https://doi.org/10.1007/s00220-016-2575-1}
}

@article{BravyiGossetKoenigTemme2019,
  author    = {Sergey Bravyi and David Gosset and Robert K{\"o}nig and Kristan Temme},
  title     = {Approximation algorithms for quantum many-body problems},
  journal   = {Journal of Mathematical Physics},
  volume    = {60},
  number    = {3},
  pages     = {032203},
  year      = {2019},
  doi       = {10.1063/1.5085428},
  url       = {https://doi.org/10.1063/1.5085428}
}

@article{GharibianKempe2012,
  author    = {Sevag Gharibian and Julia Kempe},
  title     = {Approximation algorithms for {QMA}-complete problems},
  journal   = {SIAM Journal on Computing},
  volume    = {41},
  number    = {4},
  pages     = {1028--1050},
  year      = {2012},
  doi       = {10.1137/110842272},
  url       = {https://doi.org/10.1137/110842272}
}

@inproceedings{GharibianParekh2019,
  author    = {Sevag Gharibian and Ojas Parekh},
  title     = {Almost Optimal Classical Approximation Algorithms for a Quantum Generalization of Max-Cut},
  booktitle = {Approximation, Randomization, and Combinatorial Optimization (APPROX/RANDOM 2019)},
  series    = {Leibniz International Proceedings in Informatics (LIPIcs)},
  volume    = {145},
  pages     = {31:1--31:17},
  publisher = {Schloss Dagstuhl -- Leibniz-Zentrum f{\"u}r Informatik},
  year      = {2019},
  doi       = {10.4230/LIPIcs.APPROX-RANDOM.2019.31},
  url       = {https://doi.org/10.4230/LIPIcs.APPROX-RANDOM.2019.31}
}

@inproceedings{HallgrenLeeParekh2020,
  author    = {Sean Hallgren and Eunou Lee and Ojas Parekh},
  title     = {An approximation algorithm for the max-2-local Hamiltonian problem},
  booktitle = {Approximation, Randomization, and Combinatorial Optimization. Algorithms and Techniques (APPROX/RANDOM 2020)},
  publisher = {Schloss Dagstuhl -- Leibniz-Zentrum f{\"u}r Informatik},
  year      = {2020},
  doi       = {10.4230/LIPIcs.APPROX/RANDOM.2020.59},
  url       = {https://doi.org/10.4230/LIPIcs.APPROX/RANDOM.2020.59}
}

@inproceedings{ParekhThompson2021,
  author    = {Ojas Parekh and Kevin Thompson},
  title     = {Beating random assignment for approximating quantum 2-local Hamiltonian problems},
  booktitle = {29th Annual European Symposium on Algorithms (ESA 2021)},
  series    = {Leibniz International Proceedings in Informatics (LIPIcs)},
  volume    = {204},
  pages     = {74:1--74:18},
  publisher = {Schloss Dagstuhl -- Leibniz-Zentrum f{\"u}r Informatik},
  year      = {2021},
  doi       = {10.4230/LIPIcs.ESA.2021.74},
  url       = {https://doi.org/10.4230/LIPIcs.ESA.2021.74}
}

@misc{ParekhThompson2022,
  author    = {Ojas Parekh and Kevin Thompson},
  title     = {An optimal product-state approximation for 2-local quantum Hamiltonians with positive terms},
  year      = {2022},
  eprint    = {2206.08342v1},
  archivePrefix = {arXiv},
  primaryClass = {quant-ph},
  url       = {https://arxiv.org/abs/2206.08342v1}
}

@inproceedings{AnshuGossetMorenz2020,
  author    = {Anurag Anshu and David Gosset and Karen Morenz},
  title     = {Beyond Product State Approximations for a Quantum Analogue of Max Cut},
  booktitle = {15th Conference on the Theory of Quantum Computation, Communication and Cryptography (TQC 2020)},
  editor    = {Steven T. Flammia},
  series    = {Leibniz International Proceedings in Informatics (LIPIcs)},
  volume    = {158},
  pages     = {7:1--7:15},
  address   = {Dagstuhl, Germany},
  publisher = {Schloss Dagstuhl -- Leibniz-Zentrum f{\"u}r Informatik},
  year      = {2020},
  doi       = {10.4230/LIPIcs.TQC.2020.7},
  url       = {https://doi.org/10.4230/LIPIcs.TQC.2020.7}
}

@article{King2023,
  author    = {Robbie King},
  title     = {An improved approximation algorithm for quantum max-cut on triangle-free graphs},
  journal   = {Quantum},
  volume    = {7},
  pages     = {1180},
  month     = {November},
  year      = {2023},
  doi       = {10.22331/q-2023-11-09-1180},
  url       = {https://doi.org/10.22331/q-2023-11-09-1180}
}

@inproceedings{ParekhThompsonICALP2021,
  author    = {Ojas Parekh and Kevin Thompson},
  title     = {Application of the Level-2 Quantum Lasserre Hierarchy in Quantum Approximation Algorithms},
  booktitle = {48th International Colloquium on Automata, Languages, and Programming (ICALP 2021)},
  editor    = {Nikhil Bansal and Emanuela Merelli and James Worrell},
  series    = {Leibniz International Proceedings in Informatics (LIPIcs)},
  volume    = {198},
  pages     = {102:1--102:20},
  address   = {Dagstuhl, Germany},
  publisher = {Schloss Dagstuhl -- Leibniz-Zentrum f{\"u}r Informatik},
  year      = {2021},
  doi       = {10.4230/LIPIcs.ICALP.2021.102},
  url       = {https://doi.org/10.4230/LIPIcs.ICALP.2021.102}
}

@misc{TakahashiEtAl2023,
  author    = {Jun Takahashi and Chaithanya Rayudu and Cunlu Zhou and Robbie King and Kevin Thompson and Ojas Parekh},
  title     = {An su(2)-symmetric semidefinite programming hierarchy for quantum max cut},
  year      = {2023},
  url       = {https://arxiv.org/abs/2307.15688}
}

@misc{WattsEtAl2023,
  author    = {Adam Bene Watts and Anirban Chowdhury and Aidan Epperly and J. William Helton and Igor Klep},
  title     = {Relaxations and exact solutions to quantum max cut via the algebraic structure of swap operators},
  year      = {2023},
  url       = {https://arxiv.org/abs/2307.15661}
}

@misc{carlson2024approximationalgorithmsquantummaxdcut,
      title={Approximation Algorithms for Quantum Max-$d$-Cut}, 
      author={Charlie Carlson and Zackary Jorquera and Alexandra Kolla and Steven Kordonowy and Stuart Wayland},
      year={2024},
      eprint={2309.10957},
      archivePrefix={arXiv},
      primaryClass={quant-ph},
      url={https://arxiv.org/abs/2309.10957}, 
}

@InProceedings{lee_et_al:LIPIcs.ICALP.2024.105,
  author =	{Lee, Eunou and Parekh, Ojas},
  title =	{{An Improved Quantum Max Cut Approximation via Maximum Matching}},
  booktitle =	{51st International Colloquium on Automata, Languages, and Programming (ICALP 2024)},
  pages =	{105:1--105:11},
  series =	{Leibniz International Proceedings in Informatics (LIPIcs)},
  ISBN =	{978-3-95977-322-5},
  ISSN =	{1868-8969},
  year =	{2024},
  volume =	{297},
  editor =	{Bringmann, Karl and Grohe, Martin and Puppis, Gabriele and Svensson, Ola},
  publisher =	{Schloss Dagstuhl -- Leibniz-Zentrum f{\"u}r Informatik},
  address =	{Dagstuhl, Germany},
  URL =		{https://drops.dagstuhl.de/entities/document/10.4230/LIPIcs.ICALP.2024.105},
  URN =		{urn:nbn:de:0030-drops-202482},
  doi =		{10.4230/LIPIcs.ICALP.2024.105},
  annote =	{Keywords: approximation, optimization, local Hamiltonian, rounding, SDP, matching}
}

@misc{gribling2025improvedapproximationratiosquantum,
      title={Improved approximation ratios for the Quantum Max-Cut problem on general, triangle-free and bipartite graphs}, 
      author={Sander Gribling and Lennart Sinjorgo and Renata Sotirov},
      year={2025},
      eprint={2504.11120},
      archivePrefix={arXiv},
      primaryClass={quant-ph},
      url={https://arxiv.org/abs/2504.11120}, 
}

@misc{kannan2024quantumapproximateoptimizationalgorithm,
      title={A Quantum Approximate Optimization Algorithm for Local Hamiltonian Problems}, 
      author={Ishaan Kannan and Robbie King and Leo Zhou},
      year={2024},
      eprint={2412.09221},
      archivePrefix={arXiv},
      primaryClass={quant-ph},
      url={https://arxiv.org/abs/2412.09221}, 
}

@misc{jorquera2025monogamyentanglementboundsimproved,
      title={Monogamy of Entanglement Bounds and Improved Approximation Algorithms for Qudit Hamiltonians}, 
      author={Zackary Jorquera and Alexandra Kolla and Steven Kordonowy and Juspreet Singh Sandhu and Stuart Wayland},
      year={2025},
      eprint={2410.15544},
      archivePrefix={arXiv},
      primaryClass={quant-ph},
      url={https://arxiv.org/abs/2410.15544}, 
}

@article{DBLP:journals/qic/BansalBT09,
  author       = {Nikhil Bansal and
                  Sergey Bravyi and
                  Barbara M. Terhal},
  title        = {Classical approximation schemes for the ground-state energy of quantum
                  and classical ising spin hamiltonians on planar graphs},
  journal      = {Quantum Inf. Comput.},
  volume       = {9},
  number       = {7{\&}8},
  pages        = {701--720},
  year         = {2009},
  url          = {https://doi.org/10.26421/QIC9.7-8-12},
  doi          = {10.26421/QIC9.7-8-12},
  timestamp    = {Wed, 13 Apr 2022 09:14:01 +0200},
  biburl       = {https://dblp.org/rec/journals/qic/BansalBT09.bib},
  bibsource    = {dblp computer science bibliography, https://dblp.org}
}

@misc{aharonov2002quantumnpsurvey,
      title={Quantum NP - A Survey}, 
      author={Dorit Aharonov and Tomer Naveh},
      year={2002},
      eprint={quant-ph/0210077},
      archivePrefix={arXiv},
      primaryClass={quant-ph},
      url={https://arxiv.org/abs/quant-ph/0210077}, 
}

@article{DBLP:journals/siamcomp/CubittM16,
  author       = {Toby S. Cubitt and
                  Ashley Montanaro},
  title        = {Complexity Classification of Local Hamiltonian Problems},
  journal      = {{SIAM} J. Comput.},
  volume       = {45},
  number       = {2},
  pages        = {268--316},
  year         = {2016},
  url          = {https://doi.org/10.1137/140998287},
  doi          = {10.1137/140998287},
  timestamp    = {Sun, 04 Aug 2024 19:48:44 +0200},
  biburl       = {https://dblp.org/rec/journals/siamcomp/CubittM16.bib},
  bibsource    = {dblp computer science bibliography, https://dblp.org}
}

@book{Vershynin_2018, place={Cambridge}, series={Cambridge Series in Statistical and Probabilistic Mathematics}, title={High-Dimensional Probability: An Introduction with Applications in Data Science, 2nd Edition}, publisher={Cambridge University Press}, author={Vershynin, Roman}, year={2018}, collection={Cambridge Series in Statistical and Probabilistic Mathematics}, url = {https://www.math.uci.edu/~rvershyn/papers/HDP-book/HDP-2.pdf}}

@book{kitaev2002classicalQuantumComputation,
  title={Classical and Quantum Computation},
  author={Kitaev, Alexei Yu and Shen, Alexander H. and Vyalyi, Mikhail N.},
  series={Graduate Studies in Mathematics},
  volume={47},
  publisher={American Mathematical Society},
  year={2002}
}

@article{bookatz2014QMAcomplete,
  title={QMA-complete problems},
  author={Bookatz, Adam D.},
  journal={Quantum Information and Computation},
  volume={14},
  number={5-6},
  pages={361--383},
  year={2014}
}

@article{Aharonov_2009,
   title={The Power of Quantum Systems on a Line},
   volume={287},
   ISSN={1432-0916},
   url={http://dx.doi.org/10.1007/s00220-008-0710-3},
   DOI={10.1007/s00220-008-0710-3},
   number={1},
   journal={Communications in Mathematical Physics},
   publisher={Springer Science and Business Media LLC},
   author={Aharonov, Dorit and Gottesman, Daniel and Irani, Sandy and Kempe, Julia},
   year={2009},
   month=jan, pages={41–65} }

@article{kempe2006complexityLocalHamiltonian,
  title={The complexity of the local Hamiltonian problem},
  author={Kempe, Julia and Kitaev, Alexei and Regev, Oded},
  journal={SIAM Journal on Computing},
  volume={35},
  number={5},
  pages={1070--1097},
  year={2006}
}

@INPROCEEDINGS{khot2004inapprox,
  author={Khot, S. and Kindler, G. and Mossel, E. and O'Donnell, R.},
  booktitle={45th Annual IEEE Symposium on Foundations of Computer Science}, 
  title={Optimal inapproximability results for MAX-CUT and other 2-variable CSPs?}, 
  year={2004},
  volume={},
  number={},
  pages={146-154},
  keywords={Mathematics;Statistics;Approximation algorithms;Computer science;Stability;Additive noise;Bipartite graph;Labeling;Boolean functions},
  doi={10.1109/FOCS.2004.49}}

\end{document}